\newcommand{\Rmnum}[1]{\expandafter\@slowromancap\romannumeral #1@}
\newtheorem{theorem}{Theorem}
\newtheorem{lemma}{Lemma}
\newtheorem{proposition}{Proposition}
\newtheorem{corollary}{Corollary}
\begin{document}
%
\title{Energy-saving deployment algorithms of UAV swarm for sustainable wireless coverage}

\author{ Xiao~Zhang,~\IEEEmembership{Member,~IEEE,} and
         Lingjie Duan,~\IEEEmembership{Senior Member,~IEEE}
\thanks{

X. Zhang is with College of Computer Science, South-Central University for Nationalities. L. Duan are with Engineering Systems and Design Pillar, Singapore University of Technology and Design, Singapore. E-mail: xiao.zhang@my.cityu.edu.hk, lingjie\_duan@sutd.edu.sg. Part of the work by X. Zhang was done when he was a postdoc research fellow at Singapore University of Technology and Design.

} }

\maketitle


\begin{abstract}
Recent years have witnessed increasingly more uses of Unmanned Aerial Vehicle (UAV) swarms for rapidly providing wireless coverage to ground users. Each UAV is constrained in its energy storage and wireless coverage, and it consumes most energy when flying to the top of the target area, leaving limited leftover energy for hovering at its deployed position and keeping wireless coverage. The literature largely overlooks this sustainability issue of deploying UAV swarm deployment, and we aim to maximize the minimum leftover energy storage among all the UAVs after their deployment. Our new energy-saving deployment problem captures that each UAV's wireless coverage is adjustable by its service altitude, and also takes the no-fly-zone (NFZ) constraint into account. Despite of this, we propose an optimal energy-saving deployment algorithm by jointly balancing heterogeneous UAVs' flying distances on the ground and final service altitudes in the sky. We show that a UAV with larger initial energy storage in the UAV swarm should be deployed further away from the UAV station. Moreover, when $n$ homogeneous UAVs are dispatched from different initial locations, we first prove that any two UAVs of the same initial energy storage will not fly across each other, and then design an approximation algorithm of complexity $n \log \frac{1}{\epsilon}$ to arbitrarily approach the optimum with error $\epsilon$. Finally, we consider that UAVs may have different initial energy storages, and we prove this problem is NP-hard. Despite of this, we successfully propose a heuristic algorithm to solve it by balancing the efficiency and computation complexity well.
\end{abstract}

\begin{IEEEkeywords}
UAV Smarm, Energy-efficient Deployment, Sustainable Wireless Coverage, No-Fly-Zone, Approximation Algorithm.
\end{IEEEkeywords}

\IEEEpeerreviewmaketitle

%

\vspace{3cm}

\IEEEraisesectionheading{\section{Introduction}\label{sec:intr}}
\IEEEPARstart{R}ecently, there are increasingly more exercises and commercial uses of Unmanned Aerial Vehicle (UAV) swarms for rapidly providing wireless coverage to ground users (e.g., \cite{zeng2016wireless} \cite{zhang2017optimization} \cite{chenjimingUAV} \cite{xu2018uav}). In these applications, UAVs serve as flying base stations to serve a geographical area (e.g., cell edge or disaster zone) out of the capacity or reach of territorial base stations. The continuing development of UAV applications for keeping wireless coverage faces two key challenges. First, since each UAV's wireless coverage radius (though adjustable by its deployed altitude) is small, it consumes most energy when flying over a long distance to reach the target area to serve ground users closely. This leaves limited leftover energy for the UAV network's lifetime of keeping wireless coverage afterwards, results in a severe sustainability issue. The endurance of each UAV's on-board energy storage is fundamentally limited by its weight and aircraft size. Multiple UAVs in the swarm need to cooperate to balance their energy consumption during deployment to fully cover users in the distant target area.

Second, many countries have set up sizable No-Fly-Zones (NFZs) which prohibit any deployment of UAVs inside \cite{NFZ}. Usually, NFZs include restricted areas, prohibited areas, and military bases. Take Singapore as a typical urban city example, Figure~\ref{fig:NFZ} shows that NFZs in orange widely cover airports, air bases and military context~\footnote{https://garuda.io/what-you-must-know-about-drone-no-fly-zones-nfz/}. UAVs can only fly at very low altitude to cross these NFZs before reaching their final deployment positions outside NFZs. The optimal UAV swarm deployment should consider these NFZ constraints. To the best of our knowledge, none of the existing work study the energy-saving UAV swarm deployment algorithms by considering the UAV-to-UAV cooperation and NFZ constraint. Further, since each UAV's wireless coverage is adjustable by its hovering altitude, we should jointly optimize its flying distance on the ground and final service altitude in the sky, which was also overlooked in the literature.


\begin{figure}[t]
    \centering
        \includegraphics[width=0.7\textwidth]{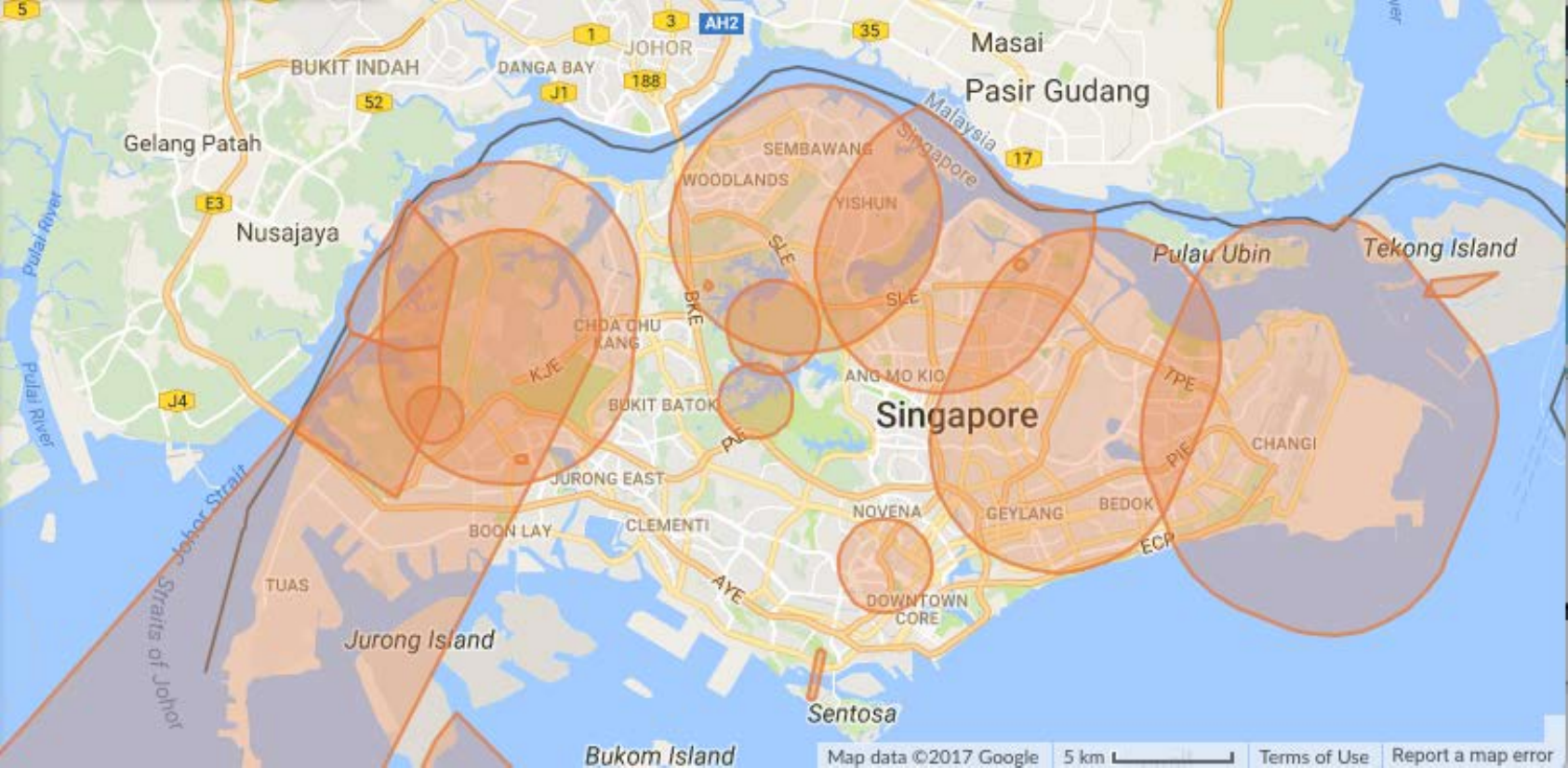}
    \caption{The distribution of No-Fly-Zones (NFZs) in orange in Singapore.}
    \label{fig:NFZ}
\end{figure}

\begin{table*}[tbh]%
\centering
\scriptsize
\caption{Summary of our algorithms for efficiently deploying the UAV swarm in different scenarios.}
\label{tab:algs}%
\begin{tabular}{|l|l|l|l|l|}
\hline
\rowcolor{lightgray}
Five deployment scenarios                                   & Performance   & Complexity    & Place in the paper  \\ [3pt]
\hline
\hline
Same initial locations and energy storages among UAVs   & optimal  & $O(1)$         & Section~\ref{sec:sameb} \\[3pt]\hline
Same initial locations \& different initial energy storages among UAVs      & optimal       & $O(n \log n)$  & Algorithm \ref{alg:same} in Section~\ref{sec:zone} \\[3pt]\hline
Different initial locations \& same energy storages & (1+$\epsilon$)-approximation & $n \log \frac{1}{\epsilon}$ & Algorithm \ref{alg:fptas} in Section \ref{sec:different}\\[3pt]\hline
Different initial locations \& different energy storages &   near optimal    & $O(n \log \frac{1}{\epsilon} C_{n}^{\kappa} \kappa !)$ & Algorithm \ref{alg:Heuristic} in Section \ref{sec:diffb} \\[3pt]\hline
Deployment in 3D space with different energy storages   & (1+$\epsilon$)-approximation & $n \log \frac{1}{\epsilon}$ & Algorithm \ref{alg:FPTAS2} in Section \ref{sec:3D}\\\hline
\end{tabular}
\end{table*}

The literature of UAV-enabled wireless communications assumes UAVs are already in or around the target area to serve ground users, and overlooks the energy-saving issue during the deployment phase of UAVs to reach the target (e.g., \cite{pan2012cooperative, al2014optimal, mozaffari2015drone, zeng2017energy, zhang2018fast}). Recent work on UAV-enabled communications have studied multiple issues such as air-to-ground transmission modeling \cite{al2014optimal} \cite{mozaffari2015drone}, interference management \cite{liwangTVT} \cite{azari2017coverage}, and UAV trajectory planning \cite{wang2018unmanned}. For example, \cite{al2014optimal} and \cite{mozaffari2015drone} investigate the optimal service altitude for a single UAV, where a higher service altitude of the UAV increases the line-of-sight opportunity for air-to-ground transmission but incurs a larger path loss. \cite{azari2017coverage} studies the mutual interference of a UAV downlink links and analyzes the link coverage probability between the UAV and ground users. \cite{wang2018unmanned} uses a UAV-enabled base station to serve multiple users on the ground and jointly optimize the transmit power and the UAV trajectory to maximize the average throughput per ground user. \cite{zeng2017energy} and \cite{wu2017joint} study the energy-efficient UAV movement and UAV-user link scheduling when serving users, and  \cite{wang2018traffic} studies how a UAV should dynamically adapt its location to user movements. Due to a UAV's small wireless service coverage, it consumes most energy when flying over a long distance to the top of the target area, leaving limited leftover energy for the UAV network's hovering and wireless coverage in service phase. It is important to optimize the UAV network deployment before the actual service phase, yet this sustainable deployment issue is largely overlooked in the literature. There are very few works studying the network deployment phase (\cite{xu2018uav} \cite{zhang2018fast}). Further, \cite{xu2018uav} studies the UAV-user interaction for learning users' truthful locations from strategic users before UAV deployment. \cite{wang2018dynamic} studies the economics issues (e.g., pricing and energy allocation) for deploying UAV-provided services.


In this paper, we study energy-saving deployment of a UAV swarm to prolong the UAV swarm's residual lifetime for keeping wireless coverage after deployment. We present Table~\ref{tab:algs} to summarize our proposed algorithms for various deployment scenarios happening in practice. Our key novelty and main contributions are summarized as follows.
\begin{itemize}
\item \emph{Energy-saving deployment of UAV swarm under UAVs' cooperation and NFZ constraint (Section~\ref{sec:sys}):} To our best knowledge, this is the first paper to study the energy-saving issue for deploying a UAV swarm to fully cover a target area, and we aim to provide long enough UAV-provided wireless coverage by seeking UAVs' mutual cooperation. We jointly optimize multi-UAVs' flying distances on the ground and service altitudes in the sky for energy saving purpose, by practically considering the correlation between each UAV's service altitude and its coverage radius as well as the NFZ distribution Our objective is to maximize the UAV swarm's lifetime which is defined as the minimum leftover energy storage among all the UAVs after their deployment.

\item \emph{Optimal deployment by balancing multi-UAVs' energy consumptions in their flights (Section~\ref{sec:same}):} When UAVs are initially located in the same location (e.g., the closest UAV station), we first propose an optimal deployment algorithm of constant complexity $O(1)$ without considering NFZ, and show that a UAV with larger initial energy storage should be deployed further away on the ground for balancing multi-UAVs' energy consumptions in the flights. Further, we show the NFZ constraint will make our problem complex and we manage to present an optimal algorithm of complexity $O(n \log n)$. We show some UAVs will be moved out of the NFZ to the edges, reducing the lifetime of the UAV swarm.

\item \emph{Near-optimal UAV deployment from different initial locations:} In Section~\ref{sec:ddif}, when dispatching UAVs from different initial locations, we first prove that any two UAVs of the same energy storage should not fly across each other during the deployment. This greatly helps us simplify the UAV swarm deployment problem by keeping the UAVs' final position order along the ground. Then we successfully design an ($1-\epsilon$)-approximation algorithm of time complexity $O(n \log \frac{1}{\epsilon})$. Further, when UAVs are of different initial locations, we prove the problem is NP-hard and propose a heuristic algorithm to balance the performance efficiency and computational complexity well. Finally, in Section V, we extend the swarm deployment to a 3D space where the target area to cover is no longer in 1D ground line but in the 2D ground plane.
\end{itemize}

\begin{figure}[t]
    \centering
        \includegraphics[width=0.5\textwidth]{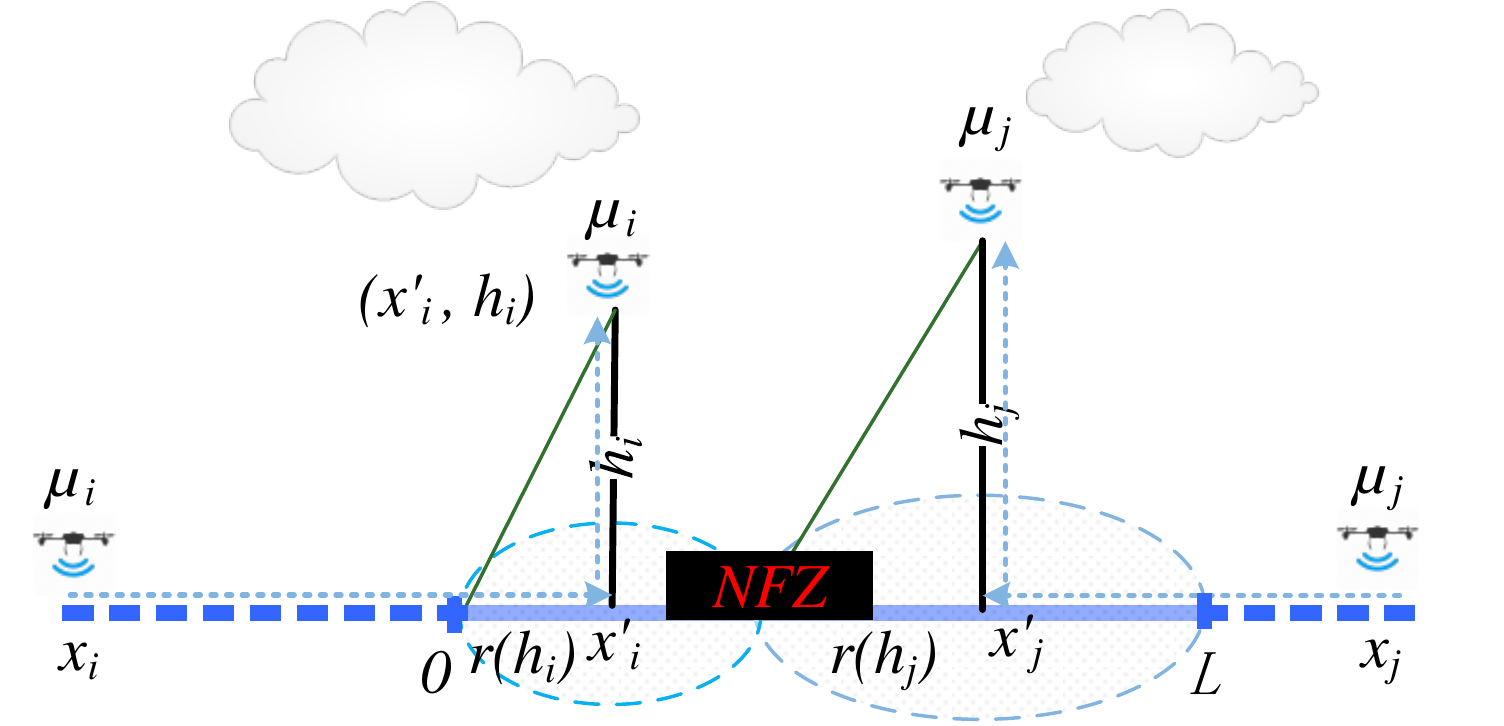}
    \caption{An illustrative example of deploying $n=2$ UAVs from their initial locations $x_i$ and $x_j$ to provide wireless coverage to the whole target area $[0, L]$. Here UAV $\mu_i$ with energy storage $B_i$ is deployed from $x_i$ initially to $x_i^{\prime}$ at service altitude $h_i$ with coverage radius $r(h_i)$ as a function of $h_i$. Neither $x_i^{\prime}$ or $x_j^{\prime}$ can be deployed inside the NFZ within $[0, L]$.}
    \label{fig:covermodel}
\end{figure}

\section{System Model and Problem Formulation}\label{sec:sys}

This section introduces our system model and problem formulation for deploying multi-UAVs to provide full wireless coverage to a target service area on the ground. The target area includes potential users in an activity to be served (e.g., crowd celebrating new year in the fifth avenue in Manhattan) and we first model the target area as a line interval $\textbf{L} = [0, L]$ in 1D, as shown in Figure~\ref{fig:covermodel}. Here, a number $n$ of UAVs in a set $\textbf{U} = \{\mu_1, \cdots, \mu_n\}$ are initially rested in 1D ground locations $\{x_1, \cdots, x_n\}$ with initial energy storages $\{B_1, \cdots, B_n\}$ before their deployment. Later in Section~\ref{sec:3D}, we will extend to model the target area in 2D (see Figure~\ref{fig:covermodel3Dswap}) and generalize our design of UAV deployment algorithms to a 3D space by considering altitude. Here, denote any UAV $\mu_i$'s final position after deployment as $(x_i^{\prime}, h_i)$ at ground distance $x_i^{\prime}$ and hovering altitude $h_i$. After deployment, UAV $\mu_i$ at position $(x_i^{\prime}, h_i)$ covers $[x_i^{\prime}-r(h_i), x_i^{\prime}+r(h_i)]$ in $[0, L]$. We require a full coverage over the target interval $[0, L]$ by deploying $n$ cooperative UAVs, i.e., $[0, L] \subseteq \bigcup_{i=1}^n [x_i^{\prime} - r(h_i), x_i^{\prime} + r(h_i)]$. According to the air-to-ground transmission model in \cite{al2014optimal} and \cite{mozaffari2015drone}, the wireless coverage radius of UAV $r(h_i)$ on the ground concavely increases with the service altitude $h_i$ due to LoS benefit and then decreases due to path loss from the turning point $h^*$. For energy saving, a UAV's flying to an altitude higher than $h^*$ is not necessary for deployment since it will not enlarge the coverage range but consume more energy for flying longer. Therefore, at the optimum the altitude $h_i$ of each UAV $\mu_i$ after deployment to be not higher than $h^*$.



Due to the NFZ policy, a UAV cannot be finally deployed within NFZ. Suppose there are in total $K$ NFZs in this area, and we model NFZ $k$ as a sub-interval $\delta_k = [\delta_k^l,  \delta_k^r]$ in $[0, L]$. For UAV $\mu_i$, $x_i^{\prime}\not\in \cup_{k=1}^K \delta_k$. Despite of this, it is still allowed for UAVs to just bypass NFZs on the way at low altitude according to \cite{NFZ}. To bypass NFZs, Figure~\ref{fig:covermodel} shows that UAV $\mu_i$ first flies horizontally from $x_i$ to $x_i^{\prime}$, then flies vertically up to $h_i$\footnote{This is equivalent to the deployment case that UAVs fly vertically first to the desired altitudes and then horizontally to the destination.}. It travels a normalized distance $d_i(x_i^{\prime}, h_i) = w \cdot |x_i-x_i^{\prime}|+h_i$, where $w < 1$ tells the different energy consumptions per unit horizontally and vertically flying distances. In practice, it is more energy consuming to fly vertically to the sky than horizontally along the ground. One may wonder the relationship between the normalized flying distance $d_i$ and the energy consumption for a UAV. According to \cite{figliozzi2017drones}, the energy consumption of UAV $\mu_i$ is a linear term $c \dot d_i$. The value of energy consumption $c$ per unit distance depends on UAV prototype, and is estimated as $21.6$ Watt hour per km (Wh/km) for UAV prototype \emph{MD4-3000} and $10.8$ Wh/km for prototype \emph{DJI S1000}, respectively.

After deployment, UAV $\mu_i$ only has leftover energy $B_i - c \cdot d_i$ to hover and providing wireless coverage. To prolong the whole UAV swarm's lifetime, we aim to maximize the minimum leftover energy among all the UAVs. Once one UAV uses up its energy, we can no longer guarantee full wireless coverage over $[0,L]$ and the UAV swarm's service lifetime is ended up. Our energy-saving deployment problem of the UAV swarm is
{
\setlength{\abovedisplayskip}{4pt}
\setlength{\belowdisplayskip}{1pt}
\begin{align}
\label{general}
&~ \max_{ \{(x_1^{\prime},h_1), \cdots, (x_n^{\prime},h_n)\}} \underset{1 \leq i \leq n} {\min} ~ B_i - c \cdot {d}_i(x_i^{\prime}, h_i) ~ , \\
&~ {\rm s.t.}, \ [0, L] \subseteq  \bigcup_{i=1}^n [x_i^{\prime} - r(h_i), x_i^{\prime} + r(h_i)],\notag \\
& ~~~~~~~~~ x_i^{\prime} \not\in \cup_{k=1}^K \delta_k. \notag
\end{align}}
\noindent To solve this max-min problem requires $n$ UAVs to cooperate with each other in deployment distance and altitude to evenly use up their energy finally. The problem~(\ref{general}) belongs to the domain of combinatorial optimization and the deployment solution is a specific combination of ordered UAVs above the ground in 2D, which is generally exponential in the number of UAVs and difficult to solve.

\section{UAV swarm deployment from a co-located UAV Station} \label{sec:same}

In this section, we study problem (\ref{general}) when dispatching all the $n$ UAVs from the nearest UAV ground station  (i.e., $x_i = x_j$ for $1 \leq i, j \leq n$). This is the case for covering a not huge service area $L$, and we do not need more UAVs from some other distant UAV stations. Without loss of generality, we assume that $x_i = 0$, $\forall 1 \leq i \leq n$, which is symmetric to the case of $x_i = L$. Then the total travel distance of UAV $\mu_i$ is $d_i=w x_i^{\prime}+h_i$.  Note that for the case of $0 < x_i < L$, we can divide the line interval into two subintervals (i.e., $[0, x_i]$ and $[x_i, L]$), and apply our deployment algorithm (as presented later) similarly over both subintervals. We first skip the NFZ constraint in the first subsection as a benchmark and will add it back later to tell its effect.

\subsection{Deployment of UAVs without NFZ constraint} \label{sec:sameb}

During the deployment, UAVs should cooperate to cover the whole target area and balance their energy consumptions. We have the following result for multi-UAV cooperation.

\begin{proposition}\label{seamless}
At an optimal solution to problem (\ref{general}), all the UAVs have the same amount of leftover energy storage after deployment, i.e., $B_i-c d_i=B_j-c d_j$, for $1\leq i, j\leq n$. Their coverage radii do not overlap with each other and seamlessly cover the target interval. That is, $[0, L] =  \bigcup_{i=1}^n [x_i^{\prime} - r(h_i), x_i^{\prime} + r(h_i)]$, as illustrated in Figure~\ref{fig:attached}.
\end{proposition}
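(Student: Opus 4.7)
The approach is to prove both claims by contradiction via local perturbations, after fixing the natural ordering $x_1^{\prime}\le x_2^{\prime}\le\cdots\le x_n^{\prime}$ of the deployed UAVs along the ground. First I would argue that any overlap between adjacent coverage intervals can be eliminated without decreasing the minimum leftover energy: if the coverage intervals of $\mu_i$ and $\mu_{i+1}$ overlap by some $\eta>0$, apply an asymmetric perturbation to $\mu_i$ by reducing $h_i$ by a small $\delta>0$ while shifting $x_i^{\prime}$ leftward by $r'(h_i)\delta$. This keeps the left boundary of $\mu_i$'s coverage fixed and retreats the right boundary by $2r'(h_i)\delta$ (absorbed by the overlap for $\delta$ small enough), so full coverage of $[0,L]$ is preserved. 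Yet $d_i=wx_i^{\prime}+h_i$ strictly decreases by $\delta(1+wr'(h_i))>0$, strictly raising $\mu_i$'s leftover energy while no other UAV is affected. Iterating this adjustment either already improves the minimum (when the chain of overlapping adjacencies reaches a bottleneck UAV, contradicting optimality) or yields an optimal configuration with no overlaps.

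Under the no-overlap conclusion, I parameterize the deployment by cut points $0=a_1<b_1=a_2<\cdots<b_{n-1}=a_n<b_n=L$ with $x_i^{\prime}=(a_i+b_i)/2$ and $h_i=r^{-1}((b_i-a_i)/2)$. Suppose for contradiction that not all $B_i-cd_i$ are equal; by adjacency connectedness along the line there exists a neighboring pair $(\mu_i,\mu_{i+1})$ whose leftover energies differ, say $\mu_{i+1}$ holding the strictly smaller value. Direct differentiation yields $\partial d_i/\partial b_i = w/2 + 1/(2r'(h_i)) > 0$ and $\partial d_{i+1}/\partial b_i = w/2 - 1/(2r'(h_{i+1}))$. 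Since $r$ is strictly increasing on $[0,h^{*}]$, at least one sign of the perturbation $\delta$ applied to $b_i$ strictly decreases $d_{i+1}$, strictly raising $\mu_{i+1}$'s leftover energy. For $|\delta|$ small, $\mu_i$'s leftover changes continuously and remains above the (now higher) value of $\mu_{i+1}$, and all other UAVs are untouched. Hence the overall minimum strictly rises, contradicting optimality.

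The main obstacle will be handling two degenerate situations that the first-order analysis above does not cover. First, if $r'(h_i)=0$ at the turning point $h^{*}$, the factor $1/r'(h_i)$ blows up and the simple perturbation becomes ill-defined; one must instead take a one-sided altitude step or a coordinated cascade across several cut points. Second, the knife-edge case $w\,r'(h_{i+1})=1$ makes $\partial d_{i+1}/\partial b_i$ vanish, erasing the first-order benefit to $\mu_{i+1}$; here one must appeal to the strict concavity of $r$ for a second-order improvement, or simultaneously perturb several neighboring cut points so that the combined first-order effect on $d_{i+1}$ is non-zero. Verifying that these remedial perturbations still preserve coverage and remain feasible with respect to $h_i\le h^{*}$ is where the bookkeeping gets delicate.
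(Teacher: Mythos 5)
Your overall strategy is the same as the paper's: a two-stage contradiction argument that first removes coverage overlaps by lowering a UAV's altitude while sliding it toward the origin, and then equalizes leftover energies by perturbing the boundary between two adjacent, seamlessly touching UAVs. The paper carries this out purely qualitatively (``lower the service altitude and move left/right relatively''), whereas you make it quantitative via the cut-point parameterization and the derivatives $\partial d_i/\partial b_i = w/2 + 1/(2r'(h_i))$ and $\partial d_{i+1}/\partial b_i = w/2 - 1/(2r'(h_{i+1}))$; both computations are correct, and your non-degenerate case is a faithful, sharper version of the argument in Appendix A.

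The gap is in your treatment of the knife-edge case $w\,r'(h_{i+1}) = 1$. You propose to rescue it with a second-order improvement ``by strict concavity of $r$,'' but the second-order term points the wrong way: since $r$ is concave increasing, $r^{-1}$ is convex, so $\partial^2 d_{i+1}/\partial b_i^2 = \tfrac{1}{4}(r^{-1})''\bigl((b_{i+1}-a_{i+1})/2\bigr) > 0$ and the vanishing first derivative makes the current cut a strict local \emph{minimizer} of $d_{i+1}$ --- perturbing it in either direction increases the deficient UAV's travel cost rather than decreasing it. Your fallback of cascading through neighboring cut points also fails precisely when the deficient UAV is the outermost one, because its outer endpoint is pinned at $0$ or $L$ and the only free cut is the degenerate one. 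Indeed, in that configuration the outermost UAV already attains the global minimum of $w(L-r(h))+h$ over every way any UAV could reach the endpoint $L$, so no rearrangement can raise its leftover energy, and a seamless optimum with unequal leftover energies can actually occur (e.g., $r(h)=\sqrt{h}$, $w=0.2$, $L=0.3$, $n=2$, cut at $a=0.1$ gives $d_1=0.0125\neq d_2=0.05$ and is globally optimal). The paper's own proof silently assumes away this regime; you correctly spotted it, but neither of your proposed repairs closes it, so the equal-leftover half of your argument is incomplete exactly where it needed to be strongest.
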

\begin{proof}
See Appendix \ref{ap:att}.
\end{proof}

As the objective of problem~(\ref{general}) is to maximize the minimum leftover energy bottleneck among the UAVs, we do not expect any two UAVs ending with different levels of leftover energy storage after deployment. Otherwise, we can improve the performance by lowering the UAV with less leftover energy and increasing the other UAV's altitude to make up the coverage gap. Then we have the following corollary. When the initial energy storage is identical, we expect the same flying distance to keep the same leftover energy among UAVs. The more energy is consumed in flying horizontal distance, less energy is left for flying vertically up to service altitude. Therefore, a UAV deployed further away on the ground should be placed to a lower altitude, while a closer UAV should be placed to a higher altitude for balancing multi-UAV energy consumption during deployment. It should be noted that if UAVs have different initial storages, the result below may not hold.

\begin{corollary}\label{coro:same}
In the special case that UAVs have the same initial energy storages (i.e., $B_i=B_j$~for~$1 \leq i,j \leq n$), in the optimal deployment solution, UAV $\mu_i$ deployed further away (i.e., with large $x_i^{\prime}$ on the ground) should be placed to a lower altitude $h_i$ for keeping the same energy consumptions among the UAVs during the deployment.
\end{corollary}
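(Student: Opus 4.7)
The plan is to invoke Proposition~\ref{seamless} and then carry out an elementary algebraic manipulation. Specifically, Proposition~\ref{seamless} already establishes that at any optimal solution the leftover energies are equalized, i.e., $B_i - c d_i = B_j - c d_j$ for all $1 \leq i, j \leq n$. This is the heavy lifting; the corollary is essentially its direct consequence under the homogeneous initial-storage assumption.

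First I would substitute $B_i = B_j$ into the equal-leftover-energy identity. Since $c > 0$, this immediately yields $d_i = d_j$ for every pair, meaning all UAVs traverse the same normalized flight distance at the optimum. Let $d^{\star}$ denote this common value.

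Next, I would unfold the definition of the normalized travel distance for the co-located station case, namely $d_i = w x_i^{\prime} + h_i$, and set $w x_i^{\prime} + h_i = d^{\star}$ for every $i$. Solving for altitude gives $h_i = d^{\star} - w x_i^{\prime}$. Because $w > 0$, $h_i$ is a strictly decreasing affine function of $x_i^{\prime}$, so $x_i^{\prime} > x_j^{\prime}$ forces $h_i < h_j$, which is exactly the claimed monotone pairing between ground distance and service altitude.

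There is no real obstacle here; the only thing I would briefly remark on is the scope of the claim. The statement only compares altitudes among UAVs that have identical initial energy storages, and the conclusion is driven entirely by the balance equation in Proposition~\ref{seamless}. I would also note, for clarity, why the result fails when the $B_i$ are heterogeneous: then $d_i - d_j = (B_i - B_j)/c$ is nonzero in general, so $h_i - h_j = -w(x_i^{\prime} - x_j^{\prime}) + (B_i - B_j)/c$ can take either sign, so a UAV that is deployed farther need not be lower. This motivates the separate treatment of heterogeneous storages in the subsequent sections.
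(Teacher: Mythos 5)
Your proposal is correct and follows essentially the same route as the paper: the paper's own justification is exactly the informal version of your argument, namely that Proposition~\ref{seamless} forces equal leftover energy, hence equal normalized travel distance $d_i = w x_i^{\prime} + h_i$ when the $B_i$ coincide, so a larger $x_i^{\prime}$ must be compensated by a smaller $h_i$. Your closing remark on why heterogeneity breaks the monotone pairing matches the paper's own caveat preceding the corollary.
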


\begin{figure}[t]
    \centering
        \includegraphics[width=0.7\textwidth]{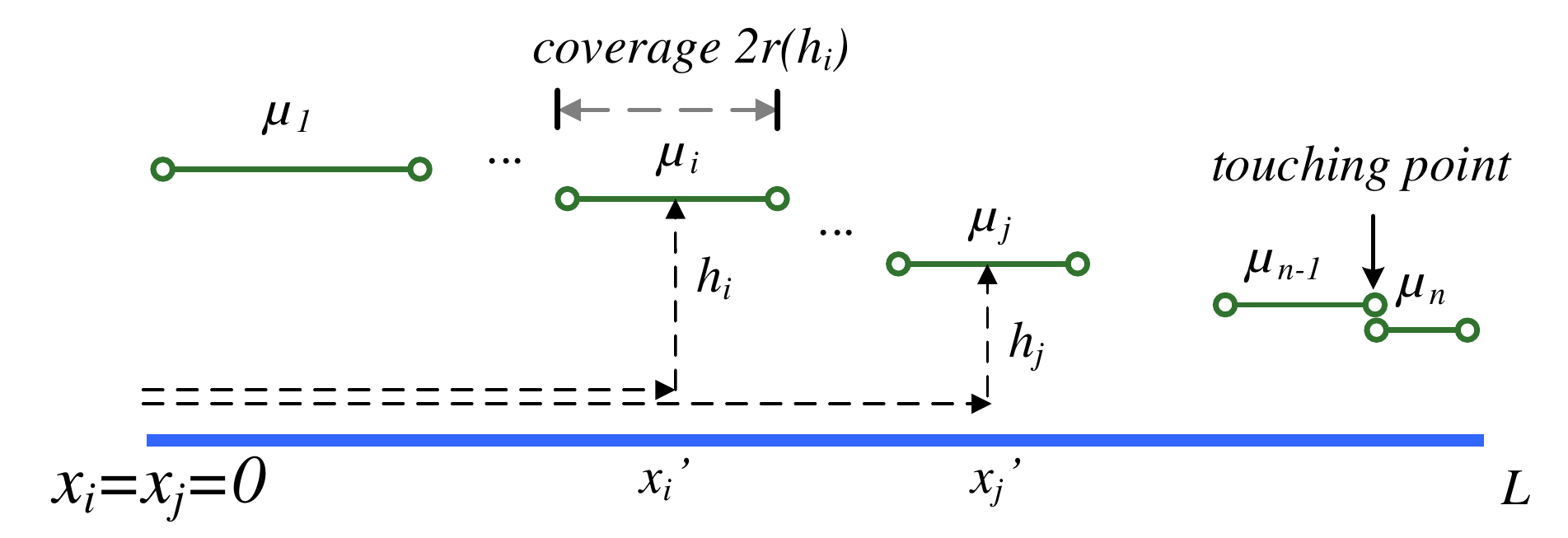}
    \caption{In the optimal solution, all the UAVs have the same leftover energy and their non-overlapping coverages seamlessly reach full coverage of target area $[0,L]$.}
    \label{fig:attached}
\end{figure}

If UAVs are heterogeneous to have different initial energy storages upon deployment, we have following proposition to show that the UAV with larger energy storage should be dispatched further away along the ground in the optimal solution.

\begin{proposition}\label{border}
Without the loss of generality, suppose the initial energy storages of the $n$ UAVs satisfy $B_1 \leq B_2 \leq \cdots \leq B_n$. Then the ground destinations of UAVs satisfy $x_1^{\prime} \leq x_2^{\prime} \leq  \cdots \leq x_n^{\prime}$ in the optimal solution to problem (\ref{general}).
\end{proposition}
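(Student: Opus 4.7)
The plan is proof by contradiction via a local exchange argument. By Proposition~\ref{seamless}, at any optimal deployment the leftover energies are equal to a common value $E$, so $d_i=(B_i-E)/c$, and the coverages seamlessly partition $[0,L]$, which allows me to index the UAVs by their ground order. Any violation of the claimed monotonicity forces the existence of an adjacent inversion, so I assume consecutive UAVs $\mu_i,\mu_j$ occupy positions $k,k{+}1$ with $B_i>B_j$ while $x_i'<x_j'$: $\mu_i$ covers $[a,b]$ and $\mu_j$ covers $[b,c]$. I will exhibit an alternative feasible deployment, obtained by swapping the identities of $\mu_i,\mu_j$ and re-optimizing the boundary separating their intervals, whose common leftover $E'$ strictly exceeds $E$, contradicting optimality.

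From $d_i-d_j=(B_i-B_j)/c$ and $x_i'-x_j'=-(r_i+r_j)$ I obtain $h_i-h_j=(B_i-B_j)/c+w(r_i+r_j)>0$, so $h_i>h_j$ and, by monotonicity of $r$ on $[0,h^*]$, $r_i>r_j$; set $\alpha:=r_i-r_j>0$. For the swapped configuration with $\mu_j$ on $[a,b']$ and $\mu_i$ on $[b',c]$, total coverage forces $r_i'+r_j'=r_i+r_j$, and the analogous balance condition yields the sign-flipped identity $h_i'-h_j'=(B_i-B_j)/c-w(r_i'+r_j')$. Writing $\beta:=r_i'-r_j'$, both $\alpha$ and $\beta$ are real solutions of $f(\gamma)=(B_i-B_j)/c\pm w(r_i+r_j)$, where $f(\gamma):=r^{-1}\bigl(\tfrac{r_i+r_j}{2}+\tfrac{\gamma}{2}\bigr)-r^{-1}\bigl(\tfrac{r_i+r_j}{2}-\tfrac{\gamma}{2}\bigr)$. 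Since $E=(B_i+B_j)/2-c(d_i+d_j)/2$ and similarly for $E'$, the desired strict inequality $E'>E$ is equivalent to $d_i^{\text{new}}+d_j^{\text{new}}<d_i+d_j$, which by direct substitution reduces to $(h_i+h_j)-(h_i'+h_j')+w(\alpha+\beta)>0$.

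Signing this last expression is the crux, and the only place where the shape, not merely the monotonicity, of $r(h)$ matters. Concavity of $r$ on $[0,h^*]$ is equivalent to convexity of $r^{-1}$, so $f$ is odd and strictly increasing in $|\gamma|$; since the constant $(B_i-B_j)/c+w(r_i+r_j)$ has strictly larger absolute value than $(B_i-B_j)/c-w(r_i+r_j)$, one concludes $|\alpha|>|\beta|$. Because $\alpha>0$ this yields $\alpha+\beta>0$. Moreover, for a convex increasing function the symmetric sum $r^{-1}(m+\gamma/2)+r^{-1}(m-\gamma/2)$ is strictly increasing in $|\gamma|$, so $|\alpha|>|\beta|$ also forces $h_i+h_j>h_i'+h_j'$. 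Both terms in the target inequality are therefore strictly positive, so $E'>E$ and the original deployment fails to be optimal, proving the proposition. I expect the convexity-of-$r^{-1}$ step to be the main subtlety, since it is the only point at which the detailed shape of the air-to-ground radius model is invoked; the reduction to an adjacent pair and the balance-equation algebra are mechanical.
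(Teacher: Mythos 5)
Your proof is correct and reaches the same conclusion by an exchange argument on an adjacent inverted pair, but the execution is genuinely different from the paper's. The paper fixes the energy budget of the two swapped UAVs (each keeps leftover $\hat{B}$) and shows, through a three-way geometric case analysis on where the relocated UAVs land ($x_i^{\prime\prime}\gtrless x_j^{\prime}$, then $x_j^{\prime\prime}\gtrless x_i^{\prime}$), that the pair's total coverage strictly grows, invoking concavity of $r$ only in the last subcase via $f(\lambda)=r(\lambda)+r(p-\lambda)$. You instead fix the covered interval $[a,c]$ and the equal-leftover constraint, parametrize both configurations by the radius difference $\gamma$, and reduce everything to two monotonicity facts about $f(\gamma)=r^{-1}(m+\gamma/2)-r^{-1}(m-\gamma/2)$ and the symmetric sum $g(\gamma)=r^{-1}(m+\gamma/2)+r^{-1}(m-\gamma/2)$; the inequality $|\alpha|>|\beta|$ then delivers both $\alpha+\beta>0$ and $h_i+h_j\geq h_i^{\prime}+h_j^{\prime}$ at once. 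This dual viewpoint (same coverage, less energy, rather than same energy, more coverage) eliminates the paper's case analysis entirely, and correctly isolates convexity of $r^{-1}$ as the only place the shape of the radius model enters; it even degrades gracefully when $r$ is affine, since $w(\alpha+\beta)>0$ alone suffices. Two small points you leave implicit, the first of which the paper also glosses over: (i) raising the leftover of the swapped pair to $E^{\prime}>E$ does not by itself increase the max-min objective, since the untouched UAVs still sit at $E$; one must add a final redistribution step (let the pair cover slightly more than $[a,c]$ and propagate the slack outward) to strictly beat $E$; (ii) the existence of $\beta$ solving $f(\beta)=(B_i-B_j)/c-ws$ inside the admissible range should be noted, though it follows immediately from the intermediate value theorem on $[-\alpha,\alpha]$ together with $|(B_i-B_j)/c-ws|<f(\alpha)$.
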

\begin{proof}
First of all, for energy saving, a UAV's flying to an altitude higher than $h^*$ is not necessary for deployment since it will not enlarge the coverage range but consume more energy for flying longer. Therefore, at the optimum the altitude $h_i$ of each UAV $\mu_i$ after deployment to be not higher than $h^*$. Thus, in our algorithm design, we only need to consider the concavely increasing part of the function $r(h_i)$.

At the optimum, we denote each UAV's leftover energy storage after deployment as $\hat{B}$. Consider two neighboring UAVs $\mu_i$ and $\mu_j$ with initial energy storages $B_i$ and $B_j$ satisfying $B_i < B_j$. They seamlessly cover a continuous interval $[x_j^{\prime} - r(h_j), x_i^{\prime} + r(h_i)]$ with touching point $x_j^{\prime} + r(h_j) = x_i^{\prime} - r(h_i)$, as shown in the upper subfigure of Figure~\ref{fig:hold}. We have $\hat{B}_i = B_i - c \cdot (w \cdot x_i^{\prime} + h_i) = \hat{B}$, and  $\hat{B}_j = B_j - c \cdot (w \cdot x_j^{\prime} + h_j) = \hat{B}$ for leaving the same residual energy after deploying the two UAVs. We prove by contradiction by supposing $x_i^{\prime} > x_j^{\prime}$ at optimality given $B_i < B_j$, then we have $h_i < h_j$. As illustrated in the lower subfigure of Figure~\ref{fig:hold}, we then swap $\mu_i$ and $\mu_j$ to show a better solution is actually achieved. Specifically, we move $\mu_i$ to $x_i^{\prime\prime}$ at altitude $h_i^{\prime}$ such that $x_i^{\prime\prime} + r(h_i^{\prime}) = x_j^{\prime} - r(h_j)$ to cover the same starting point in the target area. We can see that $\mu_i$ covers $[x_i^{\prime\prime} - r(h_i^{\prime}), x_i^{\prime\prime} + r(h_i^{\prime})]$. Then we divide our discussion, depending on the relationship between $x_i^{\prime\prime}$ and $x_j^{\prime}$.

If $x_i^{\prime\prime} \geq x_j^{\prime}$, then $h_i^{\prime} \geq h_j$ due to larger coverage and $x_i^{\prime\prime} + r(h_i^{\prime}) \geq x_j^{\prime} + r(h_j)$. In this case, we can simply move $\mu_j$ to $x_j^{\prime\prime} = x_i^{\prime}$ and $h_j^{\prime} = h_i$ to cover prior $[x_j^{\prime} + r(h_j), x_i^{\prime} + r(h_i)]$, as UAV $\mu_j$ has larger energy storage at $x_i^{\prime}$ than UAV $\mu_i$ does at $x_i^{\prime}$. Since $x_i^{\prime\prime} + r(h_i^{\prime}) \geq x_j^{\prime} + r(h_j)$ now, these two UAVs unnecessarily overlap in their coverage and we can further improve this solution beyond the optimal solution (before UAVs' swapping). This completes our proof by contradiction for this case.

Next, we consider $x_i^{\prime\prime} < x_j^{\prime}$, then $h_i^{\prime} < h_j$ and we move UAV $\mu_j$ rightwards to $x_j^{\prime\prime}$ such that $x_j^{\prime\prime} - r(h_j^{\prime}) = x_i^{\prime\prime} + r(h_i^{\prime})$ for seamless coverage from UAV $\mu_i$. As shown in Figure~\ref{fig:hold}, $x_i^{\prime} - 2r(h_j) = r(h_i)$, $x_j^{\prime\prime} - 2r(h_i^{\prime}) = r(h_j^{\prime})$. If $x_j^{\prime\prime} \geq x_i^{\prime}$ given $h_i^{\prime} < h_j$, we have $x_j^{\prime\prime} - 2r(h_i^{\prime}) = r(h_j^{\prime}) \geq x_i^{\prime} - 2r(h_i^{\prime}) \geq x_i^{\prime} - 2r(h_j) = r(h_i)$, implying $r(h_j^{\prime})\geq r(h_i)$. Therefore, we have $x_j^{\prime\prime} + r(h_j^{\prime}) \geq x_i^{\prime} + r(h_j^{\prime}) \geq x_i^{\prime} + r(h_i)$. By using up the same amount of energy for both UAVs, we cover a larger total coverage than the optimal solution, telling that UAV swapping provides a better solution than the assumed optimal solution. This completes our proof by contradiction for this subcase.

Now, we only need to consider the other subcase of $x_j^{\prime\prime} < x_i^{\prime}$. As $h_i^{\prime} = \frac{B_i - \hat{B}}{c} - w \cdot x_i^{\prime\prime}$ and $h_j^{\prime} = \frac{B_j - \hat{B}}{c} - w \cdot x_j^{\prime\prime}$, then we have $h_i^{\prime} + h_j^{\prime} = \frac{B_i - \hat{B}}{c} - w \cdot x_i^{\prime\prime} + \frac{B_j - \hat{B}}{c} - w \cdot x_j^{\prime\prime}$. Since $x_i^{\prime\prime} < x_j^{\prime}$ and $x_j^{\prime\prime} < x_i^{\prime}$, we have $h_i^{\prime} + h_j^{\prime} \geq \frac{B_i - \hat{B}}{c} - w \cdot x_i^{\prime} + \frac{B_j - \hat{B}}{c} - w \cdot x_j^{\prime} = h_i + h_j$. Due to $x_i^{\prime\prime} < x_j^{\prime}$ and $x_j^{\prime} < x_i^{\prime}$, and $h_i < h_i^{\prime}$ for leaving the same residual energy $\hat{B}$. Moreover, $h_i + h_j \leq h_i^{\prime} + h_j^{\prime} < h_j + h_j^{\prime}$ due to $h_i^{\prime}< h_j$, we have $h_i < h_j^{\prime}$.

After swapping the two UAVs, we have $x_j^{\prime\prime} = x_j^{\prime} - r(h_j) + 2 r(h_i^{\prime}) + r(h_j^{\prime})$. Note that $h_i^{\prime}+h_j^{\prime} \geq h_i+h_j$ implies $h_i^{\prime} + h_j^{\prime} > h_j$. Since $r(h)$ is an increasing function in our interested scope, we have $r(h_i^{\prime}) + r(h_j^{\prime})- r(h_j) > 0$ and further $2r(h_i^{\prime})-r(h_j)+r(h_j^{\prime})>0$. By combining this with the first equation in this paragraph, we have $x_j^{\prime\prime} >  x_j^{\prime}$ and $h_j^{\prime} < h_j$ for leaving the same residual energy $\hat{B}$. Overall, we have $h_i < h_i^{\prime} < h_j$, $h_i < h_j^{\prime} < h_j$ and $h_i + h_j < h_i^{\prime} + h_j^{\prime}$. To show the contradiction, we just need to prove $x_j^{\prime} - r(h_j) + 2 (r(h_i^{\prime}) + r(h_j^{\prime})) > x_j^{\prime} - r(h_j) + 2(r(h_i) + r(h_j))$ for enlarging the total coverage with the same $\hat{B}$. Simply, we only need to prove $r(h_i^{\prime}) + r(h_j^{\prime}) > r(h_i) + r(h_j)$.

We let $p = h_i^{\prime} + h_j^{\prime}$, then $p - h_i > h_j$. We define function $f(\lambda) = r(\lambda) + r(p - \lambda)$ and the derivative is $f^{\prime}(\lambda) = r^{\prime}(\lambda) - r^{\prime}(p - \lambda)$. $f^{\prime}(\lambda) = 0$ implies $\lambda = \frac{p}{2}$. As $r(h)$ is an increasing concave function, $r^{\prime}(h) > 0$ and function $r^{\prime}(h)$ is decreasing. Thus, $f^{\prime}(\lambda) > 0$ for $0 \leq \lambda < \frac{p}{2}$. It follows that $f(\lambda)$ is an increasing function for $0 \leq \lambda < \frac{p}{2}$. Since $p = h_i^{\prime} + h_j^{\prime}$, either $h_i^{\prime}$ or $h_j^{\prime}$ is less than $\frac{p}{2}$. Specifically, if $h_j^{\prime} \leq \frac{p}{2}$, then we have $f(h_j^{\prime}) = r(h_i^{\prime}) + r(h_j^{\prime}) > f(h_i)=r(h_i)+r(p-h_i)$ due to $h_i<h_j^{\prime}$; if $h_i^{\prime} \leq \frac{p}{2}$, then we have $f(h_i^{\prime}) = r(h_i^{\prime}) + r(h_j^{\prime}) >  f(h_i) =r(h_i)+r(p-h_i)$ due to $h_i < h_i^{\prime}$. In both cases, we thus have $r(h_i^{\prime})+r(h_j^{\prime})>r(h_i)+r(h_j)$ due to $p-h_i>h_j$. This completes our proof by contradiction for this final subcase.

\end{proof}

\begin{figure}[t]
    \centering
        \includegraphics[width=0.5\textwidth]{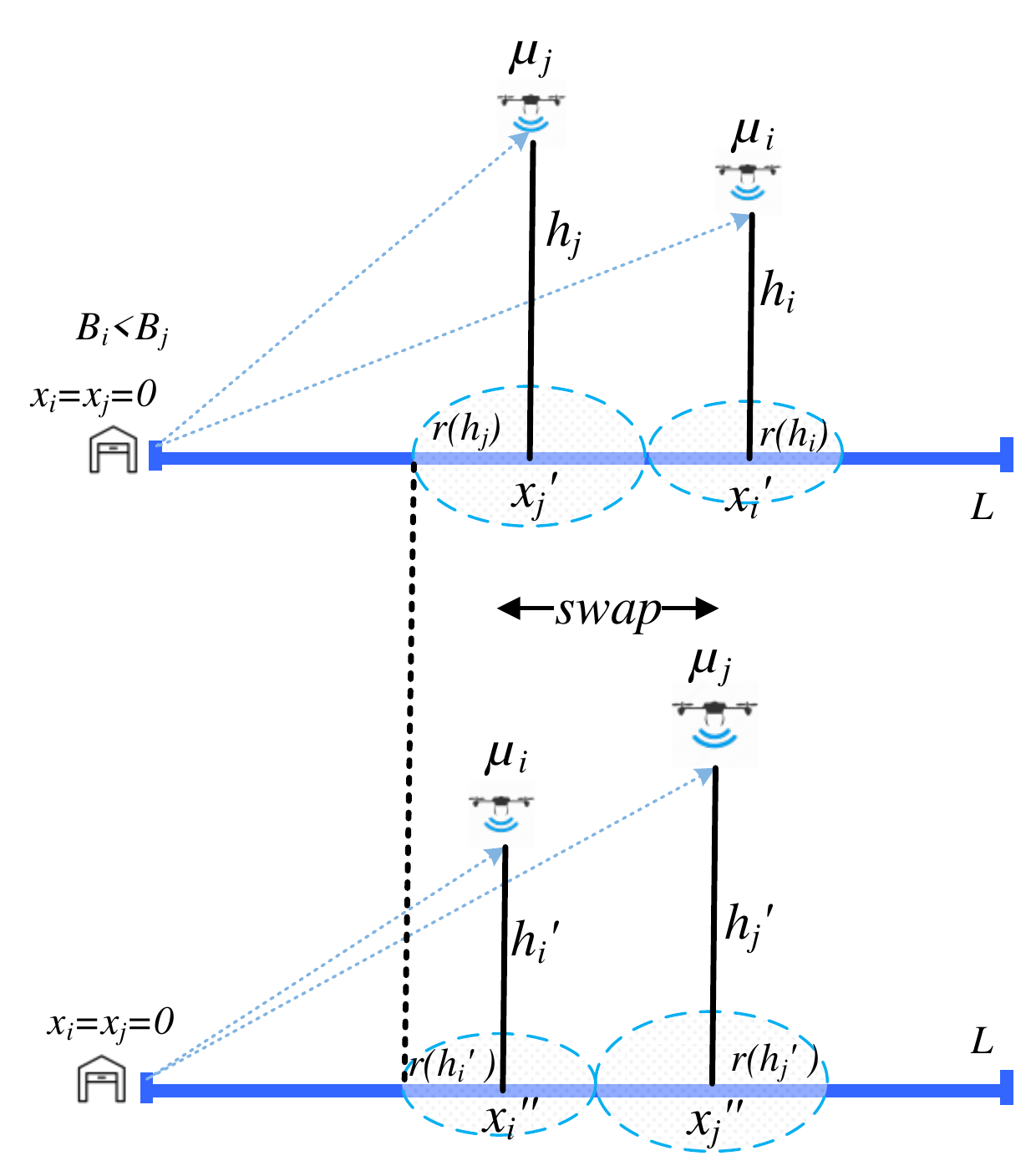}
    \caption{Proof illustration of two UAVs $\mu_i$ and $\mu_j$ with $B_i < B_j$, for showing UAV $\mu_i$ should be deployed closer than $\mu_j$.}
    \label{fig:hold}
\end{figure}

By Proposition~\ref{border}, we can determine the ground destination order of the UAVs according to the order of $B_1 \leq B_2 \leq \cdots \leq B_n$. Based on Propositions~\ref{seamless} and~\ref{border}, we are ready to determine the optimal destination of each UAV in 2D for keeping their leftover energy storages identical. Specifically, due to non-overlapping full coverage of area $[0, L]$, we first have

{
\begin{align}
2 \sum_{1 \leq i \leq n} r(h_i) = L. \label{eq:sum}
\end{align}
}
Recall that, the leftover energy storage of UAV $\mu_i$ is $B_i - c \cdot (w \cdot |x_i^{\prime}| + h_i)$, which should be the same as any other UAV's leftover energy storage, i.e.,
{
\begin{align}
\label{eq:eq}
&~ B_i - c \cdot (w \cdot (2 (r(h_1) + r(h_2) + \ldots + r(h_i)) -r(h_i)) + h_i) = \\ \notag
&~ B_j - c \cdot (w \cdot (2 (r(h_1) + r(h_2) + \ldots + r(h_j)) -r(h_j)) + h_j), ~\text{for}~1 \leq i \neq j \leq n,
\end{align}
}\noindent where any UAV $\mu_i$'s final ground destination is $x_i^{\prime} = 2 (r(h_1) + r(h_2) + \ldots + r(h_{i})) - r(h_i)$.  As we now rewrite $x_i^{\prime}$ as functions of $h_i$'s, we only have $n$ unknowns of $h_i$'s in the $n$ equations in (\ref{eq:sum}) and (\ref{eq:eq}). By solving these equations jointly in a constant time $O(1)$, we obtain the optimal deployment position $(x_i^{\prime}, h_i)$ for each UAV $\mu_i$.

\begin{figure}[t]
\centering
\subfigure[Case 1: UAV $\mu_i$ relocates to the left-edge of NFZ $\delta^l$.]{
\label{fig:subfig:case1}
\includegraphics[width=0.32\textwidth]{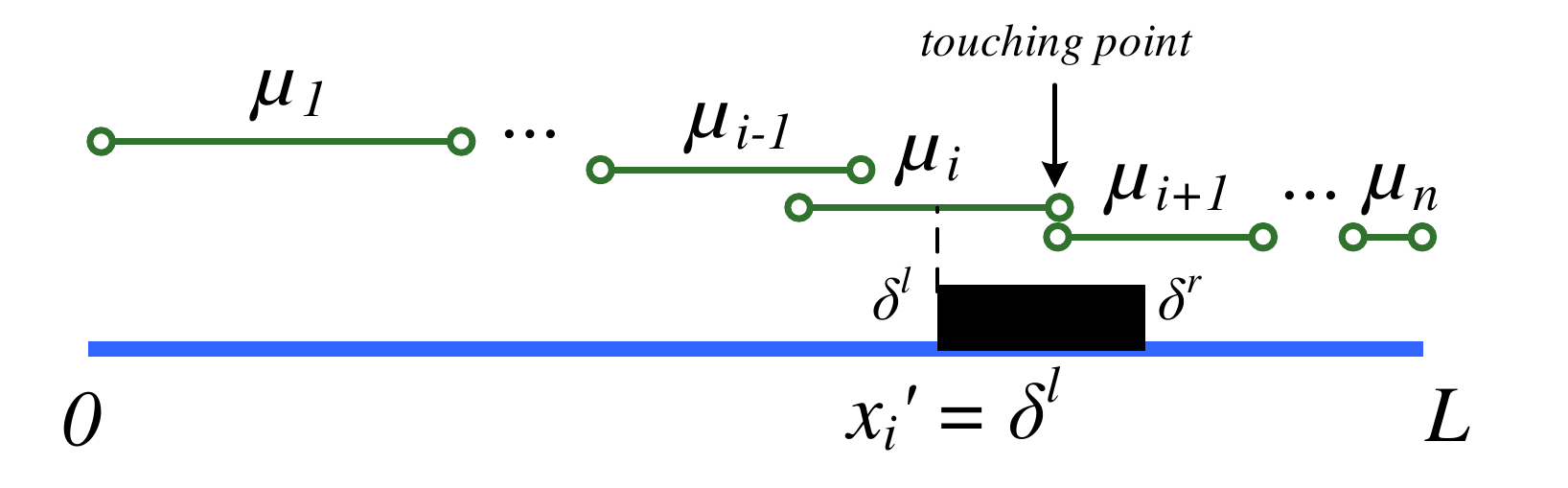}}
\hspace{-0.00in}
\subfigure[Case 2: UAV $\mu_i$ relocates to the right-edge of NFZ $\delta^r$.]{
\label{fig:subfig:case2}
\includegraphics[width=0.3\textwidth]{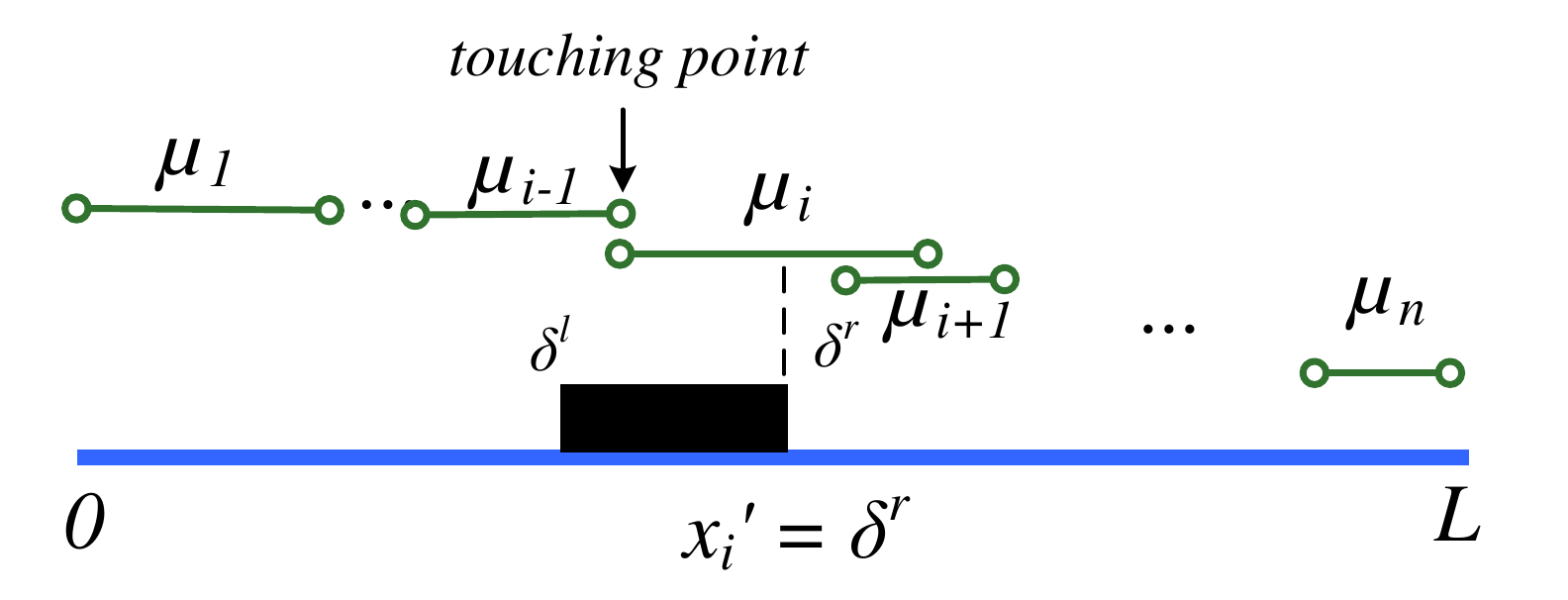}}
\hspace{-0.00in}
\subfigure[Case 3: UAVs $\mu_{i-1}$ and $\mu_{i}$ relocates to $\delta^l$ and $\delta^r$, respectively.]{
\label{fig:subfig:case3}
\includegraphics[width=0.3\textwidth]{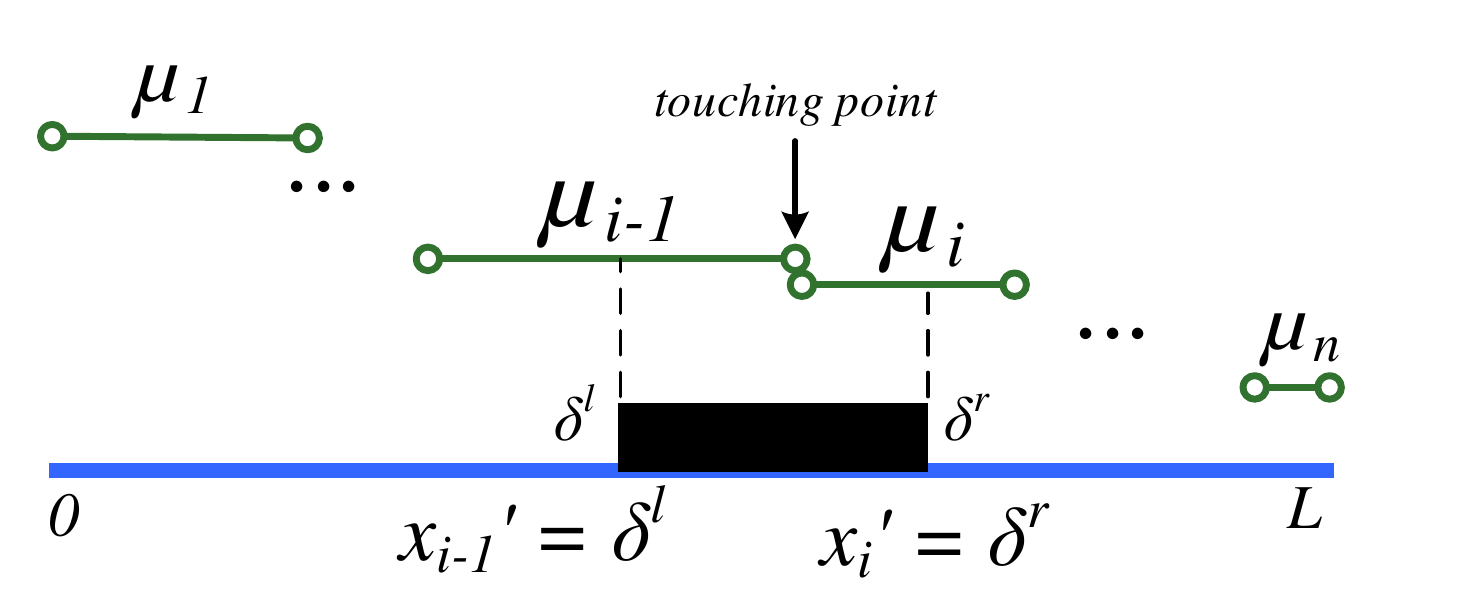}}
\caption{Refinement of the deployment solution by considering the NFZ constraint.} \label{fig:ins}
\end{figure}

\subsection{Incorporation of NFZs for multi-UAV deployment} \label{sec:zone}

As presented in Section~\ref{sec:sameb}, we can compute the maximum minimum leftover energy storage objective $\hat{B}^*$ by solving (\ref{eq:sum}) and (\ref{eq:eq}) without considering NFZs. However, if the destinations of some UAVs fall into NFZ, the solution is not feasible. Without much loss of generality, we assume there is one NFZ, i.e., ($\delta^l$, $\delta^r$).\footnote{If there is more than one NFZ, we can similarly discuss each UAV's possibility to fly into any NFZ and there are just more combinations of Cases 1-3 across NFZs as in this subsection.} In this case, we consider all three possible cases for refining the deployment solution, and we have at most two UAVs relocated to the left- and right-hand-side edges of the NFZ at the optimum.

\emph{Case 1}:~UAV $\mu_i$ is chosen among all to relocate to the left-edge of NFZ (i.e.,$x_i^{\prime} = \delta^l$), as shown in Figure~\ref{fig:subfig:case1}. This UAV is not necessarily the one inside NFZ at the solution to (\ref{eq:eq})-(\ref{eq:sum}). Similar to Proposition~\ref{seamless}, given $B_1 \leq \cdots \leq B_n$, we can show that at the performance bottleneck UAVs $\mu_i, \mu_{i+1}, \cdots, \mu_{n}$ have the same leftover energy storage as $\hat{B}^*$ and they seamlessly cover $[\delta^l -r(h_i), L]$ including NFZ without any overlap in their own coverages. That is,
        {
        \begin{align}
        &~2 \sum_{i \leq j \leq n} r(h_j) = L - \delta^l + r(h_i), \label{eq:cas11} \\
        &~B_j - c \cdot ( w \cdot ( \delta^l + r(h_i) + 2( r(h_{i+1})+ r(h_{i+2}) + \notag \\
        &~\cdots + r(h_{j-1}) + r(h_j)))) + h_j) = \notag \\
        &~B_i - c \cdot (w \cdot \delta^l + h_i) = \hat{B}^{*}, ~~ \forall~j,~ i \leq j \leq n, \label{eq:cas12}
        \end{align}
        }where UAV $\mu_i$ is newly located to ($x_i^{\prime}=\delta^l$, $h_i$) and UAV $\mu_j$ to ($x_j^{\prime}=\delta^l+r(h_i) + 2(r(h_{i+1})+ r(h_{i+2}) + \cdots + r(h_{j-1}) + r(h_j), h_j$). By solving (\ref{eq:cas11})-(\ref{eq:cas12}), we can determine the deployment positions of UAVs $\mu_i, \mu_{i+1}, \cdots, \mu_n$ as well as the objective $\hat{B}^*$. Still we need to check and make sure that the other UAVs $\mu_1, \mu_{2}, \ldots, \mu_{i-1}$ are able to cover $[0, \delta^l - r(h_i)]$ by keeping at least energy $\hat{B}^{*}$ after deployment. Note that they may overlap with UAV $\mu_i$'s wireless coverage without reaching the bottleneck.

\emph{Case 2}:~UAV $\mu_i$ is chosen among all to dispatch to the right-edge of NFZ ($x_i^{\prime} = \delta^r$), as shown in Figure~\ref{fig:subfig:case2}. Similar to Proposition~\ref{seamless}, we can show that at the performance bottleneck, UAVs $\mu_1, \mu_{2}, \ldots, \mu_{i}$ should have the same leftover energy storage as $\hat{B}^*$ and they seamlessly cover $[0, \delta^r]$ without any coverage overlap. That is,
      {\small
        \begin{align}
        &~2 \sum_{1 \leq j \leq i} r(h_j) = \delta^r + r(h_i), \label{eq:cas21} \\
        &~B_j - c \cdot ( w \cdot (2(r(h_1) + r(h_2) + \cdots + r(h_{j-1}))+r(h_{j})) \notag \\
        &~+ h_{j} ) = B_i - c \cdot (w \cdot \delta^l + h_i) = \hat{B}^{*} ~~ \forall~j,~1 \leq j \leq i,\label{eq:cas22}
        \end{align}
      }where UAV $\mu_i$ is located to ($x_i^{\prime}=\delta^r$, $h_i$) in 2D and UAV $\mu_j$ is located to ($x_j^{\prime}= 2(r(h_1) + r(h_2) + \cdots + r(h_{j-1}))+r(h_{j}), h_j$). Still we need to check and make sure that the other UAVs $\mu_i, \mu_{i+1}, \ldots, \mu_{n}$ are able to cover the rest target area $[\delta^r + r(h_i), L]$ by keeping at least energy $\hat{B}^{*}$ after deployment. Note that they may overlap with UAV $\mu_i$'s wireless coverage without reaching the bottleneck.

\emph{Case 3}:~Two neighboring UAVs $\mu_{i-1}$ and $\mu_{i}$ are chosen to dispatch to the both edges of NFZ (i.e., $x_{i-1}^{\prime} = \delta^l, x_i^{\prime} = \delta^r$), as shown in Figure~\ref{fig:subfig:case3}. The NFZ is covered by $\mu_{i-1}$ and $\mu_{i}$ seamlessly and these two UAVs' coverages do not overlap. That is,
      {
        \begin{align}
        &~r(h_{i-1}) + r(h_{i}) =  \delta^r - \delta^l, \label{eq:cas31} \\
        &~B_{i-1} - c \cdot (w \cdot \delta^l + h_{i-1}) = B_{i} - c \cdot (w \cdot \delta^r + h_{i}) = \hat{B}^{*} \label{eq:cas32}
        \end{align}
      }
Moreover, we need to check if UAVs $\mu_1, \mu_{2}, \ldots, \mu_{i-2}$ are able to cover $[0, \delta^l - r(h_{i-1})]$ and if UAVs $\mu_{i+1}, \ldots, \mu_{n}$ are able to cover $[\delta^r + r(h_i), L]$, by keeping at least energy $\hat{B}^*$ after deployment.
In these three cases, we can see that the critical UAV index $i \in \{1, \cdots, n\}$ is still undetermined. We propose to run binary search on UAV set $\{\mu_1, \cdots, \mu_n\}$ to find the optimal $i$, providing the maximum leftover energy storage for the whole UAV network. Note that the binary search has complexity $O(\log n)$ for each UAV in each case. For example, in Case 1, after solving (\ref{eq:cas11})-(\ref{eq:cas12}) for each UAV $\mu_i$, we still need to check the feasibility of the other UAVs $\mu_1, \cdots, \mu_{i-1}$ on the left-hand side to fully cover $[0, \delta^l-r(h_i)]$ in linear running time, resulting in running time $O(n \log n)$ for scanning through all the UAVs. We summarize all the above in Algorithm~\ref{alg:same} and have the following result.

\begin{algorithm}[tbh]
\caption{Deploying the UAVs from the same initial location by considering NFZ}
\begin{algorithmic}[1]

\STATE \textbf{Input:}\\ 
UAV set $\textbf{U} =\{\mu_1, \mu_2, \ldots, \mu_n\}$, \\
NFZ $[\delta^l, \delta^r]$,\\
A continuous line interval $[0, L]$ as target area

\STATE \textbf{Output:}\\ 
$\hat{B}$: the maximum leftover energy storage of the network
\STATE compute UAV final locations by solving equations (\ref{eq:sum}) and (\ref{eq:eq})
\STATE {run binary search to select any critical UAV $\mu_i$ for Cases 1, 2, 3 and compare to choose the maximum leftover storage as $\hat{B}^*$.}
\RETURN $\hat{B}^*$ \\
\end{algorithmic}
\label{alg:same}
\end{algorithm}

\begin{theorem}\label{thry:same}
When dispatching UAVs from the same UAV station under the NFZ constraint, Algorithm~\ref{alg:same} optimally finds the maximum minimum leftover energy storage in $O(n \log n)$ running time.
\end{theorem}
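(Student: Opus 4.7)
The plan is to establish two things simultaneously: that the output $\hat{B}^{*}$ of Algorithm~\ref{alg:same} equals the optimum of problem (\ref{general}) under the NFZ constraint, and that the overall running time is $O(n\log n)$. I would first dispatch the easy subcase: after Step~3, if every computed $x_i^{\prime}$ from solving (\ref{eq:sum})--(\ref{eq:eq}) already lies outside $[\delta^l,\delta^r]$, then Propositions~\ref{seamless} and~\ref{border} directly certify optimality, and Step~4 cannot strictly improve on it. So the substantive argument concerns the case where the unconstrained solution violates the NFZ.

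Next I would prove a structural lemma characterizing optimal feasible configurations. Using Proposition~\ref{border} we may index so that $x_1^{\prime}\leq\cdots\leq x_n^{\prime}$, and apply Proposition~\ref{seamless} separately to the sub-swarms on either side of the NFZ to force seamless, non-overlapping coverage on each side. If the NFZ constraint is binding, there must be an index $i$ such that $x_{i-1}^{\prime}\leq \delta^l$ and $x_i^{\prime}\geq \delta^r$; a local perturbation argument (slide a UAV toward the NFZ: it shortens $|x_i-x_i^{\prime}|$ and hence increases its leftover energy without destroying coverage, up until it just hits the NFZ edge) then shows that at optimum at least one UAV sits exactly on $\delta^l$ or $\delta^r$. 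Exhausting the sub-possibilities (only the "right-neighbor" pressed to $\delta^r$, only the "left-neighbor" pressed to $\delta^l$, or both) gives precisely Cases 1, 2, and 3, and the closed systems (\ref{eq:cas11})--(\ref{eq:cas12}), (\ref{eq:cas21})--(\ref{eq:cas22}), and (\ref{eq:cas31})--(\ref{eq:cas32}) follow from the seamless-coverage and equal-leftover-energy conditions exactly as in Section~\ref{sec:sameb}.

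For each case, once the critical index $i$ is fixed, the system can be solved in $O(n)$: substituting $x_j^{\prime}=2\sum_{k<j}r(h_k)+r(h_j)$ into the leftover-energy equality recursively expresses $h_j$ as a function of $\hat{B}^{*}$, reducing to a single scalar equation in $\hat{B}^{*}$ via (\ref{eq:cas11}), (\ref{eq:cas21}), or (\ref{eq:cas31}); feasibility of the "opposite side" of the NFZ is then checked in another $O(n)$ pass. Because there are only $O(n)$ candidate critical indices and three cases, a linear scan would already give $O(n^{2})$; to reach $O(n\log n)$ I would run a binary search on $i$, and this is the main technical obstacle.

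The hard part is establishing the monotonicity that justifies binary search. I would argue that, fixing Case~1, as $i$ grows the computed $\hat{B}^{*}(i)$ for the right sub-swarm changes monotonically (more UAVs share the right-side coverage, each flies shorter, so $\hat{B}^{*}(i)$ is nondecreasing in $i$), while the left sub-swarm gets fewer UAVs for the same or larger interval and so its achievable leftover energy is nonincreasing in $i$. The true optimum of Case~1 is the unique crossover index where the two quantities match (or the largest $i$ whose left-side feasibility still meets $\hat{B}^{*}(i)$), and this is exactly what binary search over $i$ locates in $O(\log n)$ iterations at $O(n)$ work each. The symmetric argument handles Case~2, and Case~3 is analogous with the pair $(i-1,i)$ moved by binary search. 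Taking the best among the three cases and the unconstrained solution yields the optimum, and the total cost is $O(n\log n)$, as claimed.
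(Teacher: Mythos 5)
Your proposal follows essentially the same route as the paper: Section~\ref{sec:zone} is itself the paper's entire argument for Theorem~\ref{thry:same} (the theorem is stated with no separate proof), and your reconstruction --- unconstrained solution first, then the observation that a binding NFZ forces at least one UAV onto $\delta^l$ or $\delta^r$, the exhaustive split into Cases 1--3 with the seamless-coverage and equal-leftover-energy systems, $O(n)$ work per candidate index, and binary search over the critical index $i$ --- matches it step for step. In fact you are more explicit than the paper in two places: the perturbation argument showing an edge UAV must sit exactly on $\delta^l$ or $\delta^r$, and the attempt to justify why binary search over $i$ is legitimate at all, which the paper simply asserts.

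That last point is where your write-up has a concrete slip. With the paper's indexing in Case 1, the sub-swarm anchored at $\delta^l$ is $\{\mu_i,\ldots,\mu_n\}$ covering $[\delta^l-r(h_i),L]$, so as $i$ \emph{grows} this sub-swarm \emph{shrinks}: each remaining UAV must cover more, so $\hat{B}^*(i)$ should be nonincreasing in $i$, while the left sub-swarm $\{\mu_1,\ldots,\mu_{i-1}\}$ gains members and its achievable leftover energy is nondecreasing. Your stated directions are exactly reversed. The crossover structure survives the swap, so binary search still works, but the justification as written does not match the case it is attached to. More substantively, neither you nor the paper actually \emph{proves} the monotonicity: because the $B_i$ are heterogeneous and the boundary $\delta^l-r(h_i)$ of the two sub-intervals itself moves with the solution, the ``more UAVs, shorter flights'' counting heuristic is not immediate, and this is the one genuine gap standing between the case analysis and the claimed $O(n\log n)$ bound (a linear scan over $i$ would only need the $O(n^2)$ fallback you mention). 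If you want a complete proof, this monotonicity lemma is what you need to add; the rest of your argument is sound and consistent with Propositions~\ref{seamless} and~\ref{border}.
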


\section{Deploying UAVs from different initial locations} \label{sec:ddif}

In this section, we study the problem when UAVs may be initially located at different locations (e.g., different UAV ground stations or the places that UAVs rest after last task). This is especially the case under emergency when we need a lot more UAVs than those just from the nearest UAV station. Due to the UAV diversity in both initial energy storage and initial location, this problem becomes very difficult. It belongs to the domain of combinatorial optimization and the complexity is generally exponential in the number of UAVs. We aim to design approximation algorithms to maximize the minimum leftover energy storage.

\subsection{When UAVs have identical initial energy storage} \label{sec:different}

In this subsection, we first study the special case when all UAVs have identical initial energy storage $B$. Without loss of generality, we assume the UAVs are sorted according to the increasing order of their initial ground locations, i.e., $x_1 \leq x_2 \cdots \leq x_n$. We first prove that all UAVs' destinations after deployment should follow the same order as the initial locations.

\begin{proposition}\label{order}
Given that the initial locations of the $n$ UAVs follow the order $x_1 \leq x_2 \leq \cdots \leq x_n$, the final ground destinations of them preserve the same order $x_1^{\prime} \leq x_2^{\prime} \leq \cdots \leq x_n^{\prime}$ in the optimal solution of deployment.
\end{proposition}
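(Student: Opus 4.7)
The plan is a proof by contradiction using an ``uncrossing'' exchange argument, iterated to remove every inversion in the final positions.

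Suppose some optimum contains $i<j$ with $x_i\le x_j$ yet $x_i'>x_j'$. By (the natural extension of) Proposition~\ref{seamless} to different initial locations under identical initial energy storage $B$, all UAVs share the same leftover energy at the optimum, so $d_i=d_j=D^*$; moreover the coverages of $\mu_i$ and $\mu_j$ are adjacent and together cover a sub-interval $[A,B]:=[x_j'-r(h_j),\,x_i'+r(h_i)]$ with touching point $x_j'+r(h_j)=x_i'-r(h_i)$.

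I would first perform a direct slot swap, reassigning $\mu_i$ to $(x_j',h_j)$ and $\mu_j$ to $(x_i',h_i)$. The set of (position, altitude) slots is unchanged, so coverage of $[0,L]$ and all NFZ constraints are preserved. The new distances $\tilde d_i=w|x_i-x_j'|+h_j$ and $\tilde d_j=w|x_j-x_i'|+h_i$ satisfy the classical ground-distance uncrossing inequality $|x_i-x_j'|+|x_j-x_i'|\le|x_i-x_i'|+|x_j-x_j'|$ (established by a short case analysis on the relative ordering of the four points $\{x_i,x_j,x_i',x_j'\}$), whence
\[
\tilde d_i+\tilde d_j\le d_i+d_j=2D^*.
\]

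The sum bound alone does not immediately force $\max(\tilde d_i,\tilde d_j)\le D^*$, since altitudes migrate to the ``wrong'' UAV after the swap; a concrete instance shows that one of $\tilde d_i,\tilde d_j$ can exceed $D^*$. To repair this I would embed the slot swap inside the one-parameter family of uncrossed redeployments of $\mu_i,\mu_j$ on $[A,B]$: for $\rho\in[0,(B-A)/2]$ place $\mu_i$ at $(A+\rho,\,r^{-1}(\rho))$ on the left and $\mu_j$ at $((A+B)/2+\rho,\,r^{-1}((B-A)/2-\rho))$ on the right, preserving full coverage of $[A,B]$. At $\rho=r(h_j)$ this family coincides with the slot swap. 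The difference $\tilde d_i(\rho)-\tilde d_j(\rho)$ is continuous in $\rho$ and changes sign on $[0,(B-A)/2]$ (at $\rho=0$, $\mu_i$ is degenerate and $\mu_j$ carries the whole coverage, at $\rho=(B-A)/2$ the roles are reversed), so by the intermediate-value theorem a balanced $\rho^\star$ exists with $\tilde d_i(\rho^\star)=\tilde d_j(\rho^\star)=:D^\star$. Combining the balance equation with the sum bound inherited from the slot swap (and verified to carry over to the balanced point via case work on the geometry) yields $D^\star\le D^*$. Replacing $\mu_i,\mu_j$ with this balanced uncrossed pair keeps every other UAV untouched, preserves coverage, weakly improves the bottleneck, and strictly reduces the number of inversions. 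After at most $\binom{n}{2}$ such uncrossings, we arrive at an optimal deployment satisfying $x_1'\le x_2'\le\cdots\le x_n'$.

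The main obstacle is the bound $D^\star\le D^*$ at the balanced point: the sum $\tilde d_i(\rho)+\tilde d_j(\rho)$ is not monotonic in $\rho$ in general, so propagating the $\le 2D^*$ bound from $\rho=r(h_j)$ to $\rho=\rho^\star$ requires case work on where $x_i,x_j$ sit relative to $[A,B]$ and on the sign-flips of the absolute values $|x_i-(A+\rho)|,|x_j-((A+B)/2+\rho)|$ as $\rho$ varies. The concavity of $r$ (equivalently convexity of $r^{-1}$) is the key technical ingredient that keeps the sum controlled across these sub-cases. A secondary item is to spell out the extension of Proposition~\ref{seamless} to different initial locations, which follows the same local-perturbation idea as the original.
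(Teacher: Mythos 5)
Your overall strategy (uncross an adjacent inverted pair, then iterate over inversions) matches the paper's, but the paper's proof is a one-shot swap: it exchanges the two ground destinations while keeping the covered sub-interval fixed and simply asserts that both travel distances shrink. You correctly notice that this assertion is not automatic---after a slot swap the altitude terms attach to the ``wrong'' UAVs and one of $\tilde d_i,\tilde d_j$ can exceed $D^*$---and you try to repair it with an intermediate-value rebalancing step. The problem is that the repair is never closed, and the step you defer is precisely the mathematical content of the proposition.

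Concretely, the inequality $D^\star\le D^*$ at the balanced point $\rho^\star$ is the whole proof, and the route you propose (inherit $\tilde d_i+\tilde d_j\le 2D^*$ from the slot swap and ``carry it over'') already fails to give slack in the most natural configuration. Take $x_i\le x_j\le x_j'<x_i'$, i.e.\ both UAVs fly rightward and the left-starting UAV overshoots to the right slot. Then all four absolute values open with the same sign and the uncrossing inequality is an \emph{equality}, $|x_i-x_j'|+|x_j-x_i'|=|x_i-x_i'|+|x_j-x_j'|$, so the slot swap yields $\tilde d_i+\tilde d_j=2D^*$ exactly, with $\tilde d_i=D^*+w(x_j-x_i)>D^*$. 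With zero slack at $\rho=r(h_j)$, any growth of the sum along your family kills the bound, and the sum $w\bigl(|x_i-(A+\rho)|+|x_j-\tfrac{A+B}{2}-\rho|\bigr)+r^{-1}(\rho)+r^{-1}\bigl(\tfrac{B-A}{2}-\rho\bigr)$ is genuinely non-constant in $\rho$: the altitude part is convex with minimum at $\tfrac{B-A}{4}$ and the horizontal part is piecewise linear, so whether the sum is still $\le 2D^*$ at $\rho^\star$ depends on where $\rho^\star$ lands relative to $r(h_i)$, $r(h_j)$ and $\tfrac{B-A}{4}$---exactly the case analysis you postpone. Until that is carried out, the exchange step, and hence the induction on inversions, is not established. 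A secondary flaw: with distinct initial locations the ``all leftover energies are equal at the optimum'' extension of Proposition~\ref{seamless} is false (a UAV that happens to start near its slot can retain strictly more energy than the bottleneck UAV), so you should argue from $d_i,d_j\le D^*$ rather than $d_i=d_j=D^*$; that part is repairable, but the $D^\star\le D^*$ step is a genuine gap as written.
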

\begin{proof}
Consider any two neighboring UAVs $\mu_i$, $\mu_j$ with initial locations $x_i \leq x_j$ on the ground. We prove by contradiction here. After deployment, UAVs $\mu_i$ and $\mu_j$ cover a continuous portion of the line interval $l$, and suppose $x_i^{\prime} > x_j^{\prime}$ in an optimal solution, as illustrated on the left-hand side of Figure~\ref{fig:order}. We obtain their travel distances $d_i = w \cdot |x_i-x_i^{\prime}| + h_i$ and $d_j = w \cdot |x_j- x_j^{\prime}| + h_j$. If we swap the locations of $\mu_i$ and $\mu_j$ without changing their altitudes as illustrated on the right hand side of Figure~\ref{fig:order}, the new travel distances $d_i$ and $d_j$ are smaller than before and they save more energy by keeping the same total coverage $l$. Then the UAV network's leftover energy increases and the previous allocation of $x_i^{\prime}\geq x_j^{\prime}$ is actually not optimal. This is the contradiction and we require $x_i^{\prime} \leq x_j^{\prime}$ given $x_i \leq x_j$.
\end{proof}

\begin{figure}[t]
    \centering
        \includegraphics[width=0.7\textwidth]{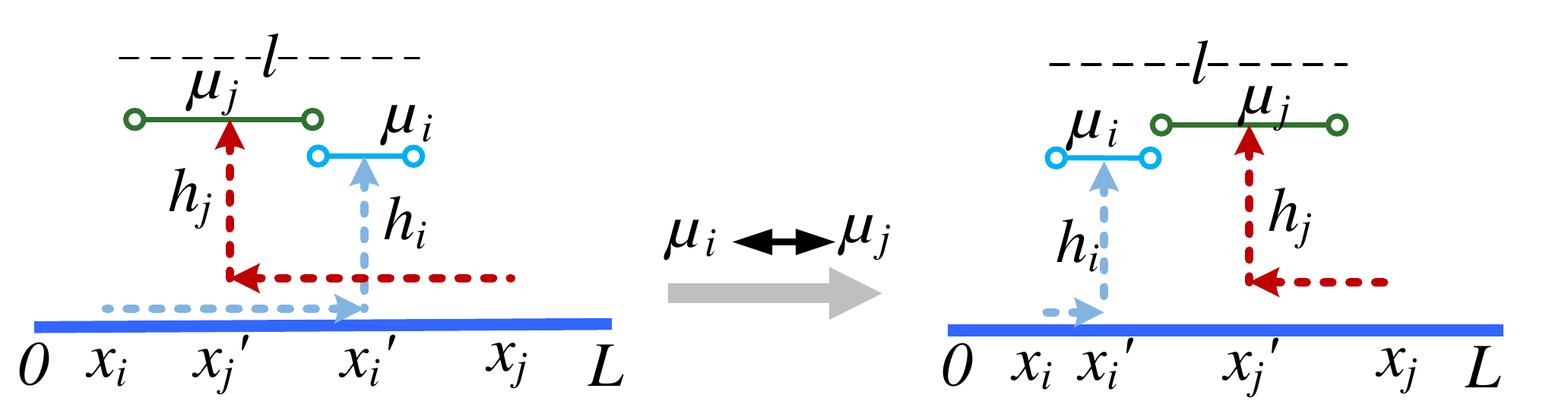}
    \caption{Illustration of initial order preserving of UAVs.}
    \label{fig:order}
\end{figure}

Proposition~\ref{order} greatly simplifies the algorithm design of deploying UAVs by fixing their location order after deployment, and we do not need to consider many combinations of UAVs' possible location orders in the combinatorial optimization. It also holds after incorporating NFZs and we next design the algorithm in two stages. First, we introduce the feasibility checking problem and design Algorithm~\ref{alg:decalg} to determine whether we can find a deployment scheme for any given leftover energy storage $\hat{B}$. Then, we run binary search over all these feasible energy storages to find the optimum in Algorithm~\ref{alg:fptas}.
\subsubsection{Feasibility checking problem}\label{sec:fea1}

\begin{algorithm}[t]
\caption{Feasibility checking algorithm to keep leftover energy storage $\hat{B}$}
\begin{algorithmic}[1]

\STATE \textbf{Input:}\\ 
$\textbf{U} =\{\mu_1, \mu_2, \ldots, \mu_n\}$ \\
$\hat{B}$: a given amount of leftover energy storage for each UAV

\STATE \textbf{Output:}\\ 
Feasibility result of $\hat{B}$

\STATE Compute $a_{i}$ in equation (\ref{eq:a}) and $b_{i}$ in equation (\ref{eq:b})
\STATE $\overline{L} = 0$;

\FOR { $i = 1$ to $n$ }
\IF {$\overline{L} \in [a_i, b_i]$}
	\STATE {$x_i^{\prime} \gets \min\{ \overline{L} + r(h_i), b_i(h_i)-r(h_i)\}$}
     \IF {$\delta^l <  x_i^{\prime} < \delta^r$}
        \STATE {$x_i^{\prime} \gets \delta^l$}
        \STATE {$\overline{L} \gets \max\{\overline{L}, \delta^l + r(\frac{B - \hat{B}}{c} - w|x_i^{\prime}- x_i|)\}$}
    \ENDIF
	\STATE {$\overline{L} \gets x_i^{\prime}+r(h_i)$}
\ENDIF
\ENDFOR

\IF {$\overline{L} < L$}
	\RETURN $\hat{B} \ is \ not feasible$ ($\hat{B} > \hat{B}^*$)\\
\ELSE
	\RETURN $\hat{B} \ is \ feasible$ ($\hat{B} \leq \hat{B}^*$) and update final positions $(x_i^\prime,h_i)$'s \\
\ENDIF

\end{algorithmic}
\label{alg:decalg}
\end{algorithm}

Given any amount of leftover energy storage $\hat{B} \in (0, B)$, we want to determine whether the energy budget $B - \hat{B}$ is feasible to support all the UAVs to reach a full coverage of $[0, L]$ and avoid sitting into NFZs. Let $\hat{B}^*$ denotes the maximum minimum leftover storage, we next design the feasibility checking algorithm to determine whether $\hat{B} > \hat{B}^*$ (infeasible for the UAVs to cover the whole target area) or $\hat{B} \leq \hat{B}^*$ (feasible). Note that $\hat{B}$ is unknown yet and will be determined in next subsection. Our feasibility checking algorithm also works for the case with NFZs and its complexity order does not increase in the number of NFZs.

For UAV $\mu_i$, we respectively denote $a_i$ as the leftmost ground point and $b_i$ as the rightmost point on $L$ that can be covered by this UAV with $\hat{B}$ leftover storage and altitude $h_i$. Note that $a_i<x_i<b_i$. To cover $a_i$, UAV $\mu_i$ travels horizontally $x_i-a_i-r(h_i)$ distance, and it travels horizontally $b_i-r(h_i)-x_i $ distance to cover $b_i$. Then we have
{
\setlength{\abovedisplayskip}{3pt}
\setlength{\belowdisplayskip}{1pt}
\begin{align}
\label{eq:a}
a_i(h_i)= (\frac{\hat{B} - B}{c \cdot w}) + x_i + \frac{h_i}{w}- r(h_i),
\end{align}
}
{
\setlength{\abovedisplayskip}{3pt}
\setlength{\belowdisplayskip}{3pt}
\begin{align}\label{eq:b}
b_i(h_i)= (\frac{B - \hat{B}}{c \cdot w}) + x_i - \frac{h_i}{w} + r(h_i),
\end{align}
}

\noindent both of which are functions of altitude $h_i$. Without loss of generality, we sequentially deploy UAVs to cover the target area from the left to right hand side, and we denote the currently covered interval as $[0, \bar{L}]$. 

In Algorithm~\ref{alg:decalg}, we first respectively compute $a_i$ and $b_i$ in equations (\ref{eq:a}) and (\ref{eq:b}) in line 3, and then deploy the UAVs one by one (in line 5) according to their initial locations' order to cover from the left endpoint of target interval $[0, L]$ as in Proposition~\ref{order}. As $x_1 \leq x_2 \leq \cdots \leq x_n$, we start with UAV $\mu_1$ and end up with UAV $\mu_n$. Specifically, given our currently covered interval $[0, \overline{L}]$, we check whether UAV $\mu_i$ can extend $\bar{L}$ within its energy budget $B-\hat{B}$ (i.e., $\overline{L} \in [a_i, b_i]$ in line 6). If so, we will deploy UAV $\mu_i$'s ground destination to $x_i^{\prime} = \min(\overline{L}+r_i(h_i), b_i(h_i)- r_i(h_i))$ in line 7 and cover from point $\bar{L}$. Note that we do not use a UAV if $\bar{L}$ is not within $[a_i, b_i]$. If the computed $x_i^{\prime}$ falls into some NFZ $(\delta^l, \delta^r)$, we should deploy UAV $\mu_i$ to $\delta^l$ on the ground and the corresponding service altitude is $\frac{B-\hat{B}}{c} - w \cdot |x_i - \delta^l|$ in the sky, as shown in lines 6-13. If the outcome of the algorithm shows $\hat{B}$ is feasible i.e., ($\hat{B} \leq \hat{B}^*$) in line 18, our algorithm will further return the UAVs' final locations $(x_i^{\prime}, h_i)$'s in 2D. Otherwise, it returns that $\hat{B}$ is infeasible in line 16. Overall, Algorithm~\ref{alg:decalg} solves the feasibility checking problem given a particular leftover energy storage $\hat{B}$ in linear running time.

\begin{algorithm}[t]
\caption{Approximation for multi-UAV deployment from different initial locations}
\begin{algorithmic}[1]

\STATE \textbf{Input:}\\ 
$\Lambda = \{\epsilon \hat{B}_l, 2 \epsilon \hat{B}_l, \cdots,  \lceil{\frac{\hat{B}_u}{\epsilon \cdot \hat{B}_l}}\rceil \epsilon \hat{B}_l\}$ \\

\STATE \textbf{Output:}\\ 
$\Lambda(ind)$: $ind$ is the selected index

\STATE $low \gets 1$ and $high \gets \lceil{\frac{\hat{B}_u}{\epsilon \cdot \hat{B}_l}}\rceil$
\WHILE{$low <= high$}
       \STATE {$mid \gets \lfloor{(low + high)/2}\rfloor$}
       \STATE {feasibility checking on $\Lambda(mid)$ by Algorithm~\ref{alg:decalg}}
       \IF {$\Lambda(mid)$ is feasible}
            \STATE {$high \gets mid$}
       \ELSE
            \STATE {$low \gets mid$}
       \ENDIF
        \IF {$low = high-1$}
            \STATE {$ind \gets high$}
            \STATE {break}
        \ENDIF
\ENDWHILE
\RETURN $\Lambda(ind)$
\end{algorithmic}
\label{alg:fptas}
\end{algorithm}

\subsubsection{Binary search over all feasible energy storages}\label{sec:bs}
With the help of Algorithm~\ref{alg:decalg}, we can verify whether a given leftover energy storage $\hat{B}$ is feasible or not. The maximum leftover storage among all the feasible ones is actually the optimum. Here, we apply binary search to find the maximum leftover storage upon determining the search scope of $\hat{B}$ and step size.

We first determine the search scope by computing its upper and lower bounds. By using the $n$ UAVs to cover the line interval of length $L$, we must have at least one UAV with coverage radius $r \geq \frac{L}{2n}$. For each UAV $\mu_i$, the minimum moving distance is its service altitude $h_i$ without any ground movement. Suppose UAV $\mu_i$ is the one with coverage radius $r(h_i) \geq \frac{L}{2n}$ or equivalently altitude $h_i \geq r^{-1}(\frac{L}{2n})$, the upper bound of minimum leftover energy storage among all the UAVs is $\hat{B}_u = B - c \cdot r^{-1}(\frac{L}{2n})$.

We next determine the lower bound. We know that the longest horizontal travel distance of UAV $\mu_i$ is $\max\{|L - x_i|, |x_i|\}$. Without flying any distance horizontally but vertically to the sky, the maximum service altitude is $r^{-1}(\max\{|x_i|, |L-x_i|\})$ to cover $[0, L]$. Then we have the lower bound $\hat{B_l}=B- c(w \max\{|x_i|, |L-x_i|\} + r^{-1}(\max\{|x_i|, |L-x_i|\}))$.\footnote{In case that this formula returns a negative lower bound, we can replace it by the smallest possible positive energy storage.}

Now we are ready to present the approximation algorithm in Algorithm~\ref{alg:fptas} by combining both binary search and feasibility checking in Algorithm~\ref{alg:decalg}. We denote the relative error as $\epsilon$, and accordingly set the search accuracy as $\epsilon \hat{B}_l$ in line 1 of Algorithm~\ref{alg:fptas}. Algorithm~\ref{alg:fptas} starts with $\epsilon \hat{B}_l$ in line 3 and stops once turning from feasible leftover energy storage $\hat{B}^{\prime}$ to infeasible $\hat{B}^{\prime\prime}$ in lines 12-15 as illustrated in illustrated in Figure~\ref{fig:binary_search}. Then, the final $\hat{B}^{\prime}$ in line 17 is our searched optimum and is denoted as $\hat{B}^*$.

\begin{figure}[t]
    \centering
        \includegraphics[width=0.7\textwidth]{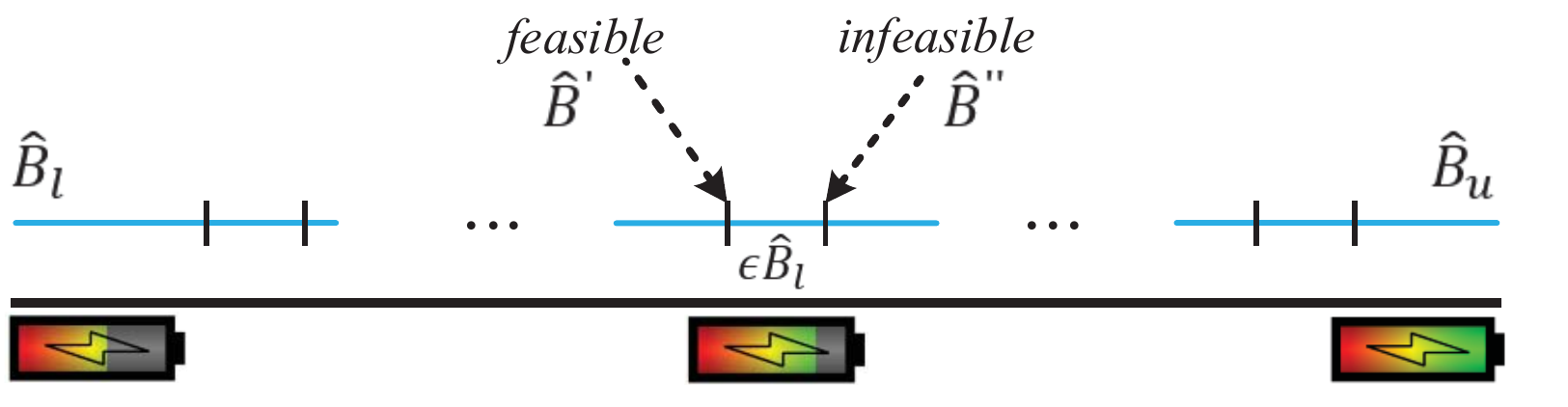}
    \caption{Binary search on $[\hat{B}_l ,\hat{B}_u]$ with accuracy level of $\epsilon \cdot \hat{B}_l$.}
    \label{fig:binary_search}
\end{figure}
\begin{theorem}\label{thrm:FPTAS}
Let $\hat{B}^*$ be the optimal leftover energy storage in problem (\ref{general}). Algorithm~\ref{alg:fptas} presents an $(1 - \epsilon)$-approximation algorithm of computational complexity $O(n \log \frac{1}{\epsilon})$.
\end{theorem}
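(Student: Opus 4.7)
The plan is to prove the two assertions of the theorem separately: first the $(1-\epsilon)$-approximation guarantee, then the $O(n\log\tfrac{1}{\epsilon})$ running-time bound. Both follow from the structure of Algorithm~\ref{alg:fptas} as a binary search over the discrete grid $\Lambda$ whose step size is $\epsilon\hat{B}_l$, combined with Algorithm~\ref{alg:decalg} as the feasibility oracle. The heart of the argument is that (i) Algorithm~\ref{alg:decalg} is a \emph{monotone} feasibility tester in $\hat{B}$ (if $\hat{B}$ is feasible, then every $\hat{B}^{\prime}\le\hat{B}$ is also feasible), which is exactly what licenses the binary search, and (ii) the chosen lower bound $\hat{B}_l$ is a valid lower bound on the true optimum $\hat{B}^{\ast}$.

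For the approximation ratio, I would first invoke Proposition~\ref{order} together with the construction of $a_i(h_i),b_i(h_i)$ in \eqref{eq:a}--\eqref{eq:b} to argue that Algorithm~\ref{alg:decalg} correctly decides feasibility: given a target leftover energy $\hat{B}$, the algorithm processes UAVs in order-preserving fashion, and whenever a UAV can be used to extend the already-covered prefix $[0,\bar{L}]$ it is placed as far right as its energy budget permits (with an NFZ-correction to $\delta^l$ when needed). Monotonicity in $\hat{B}$ is immediate because increasing $\hat{B}$ shrinks every interval $[a_i,b_i]$ and reduces the reachable $x_i^{\prime}$. Given monotonicity, the binary search in Algorithm~\ref{alg:fptas} correctly returns the largest grid value $\hat{B}^{\prime}\in\Lambda$ that is feasible; the next grid value $\hat{B}^{\prime}+\epsilon\hat{B}_l$ is infeasible, so $\hat{B}^{\ast}<\hat{B}^{\prime}+\epsilon\hat{B}_l$. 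Combining with $\hat{B}^{\ast}\ge\hat{B}_l$ (by the definition of $\hat{B}_l$ as a feasible leftover level that moves each UAV only vertically) gives
\begin{equation*}
\hat{B}^{\prime} \;>\; \hat{B}^{\ast}-\epsilon\hat{B}_l \;\ge\; \hat{B}^{\ast}-\epsilon\hat{B}^{\ast} \;=\; (1-\epsilon)\hat{B}^{\ast},
\end{equation*}
which is the claimed approximation ratio.

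For the running-time bound, I would count $|\Lambda|=\lceil\hat{B}_u/(\epsilon\hat{B}_l)\rceil$, so the binary search performs $O(\log(\hat{B}_u/\hat{B}_l)+\log\tfrac{1}{\epsilon})$ feasibility queries. Each call to Algorithm~\ref{alg:decalg} runs in $O(n)$ because it scans UAVs once, does an $O(1)$ per-UAV update, and treats each NFZ in amortized constant time. Since $\hat{B}_u$ and $\hat{B}_l$ are determined by the instance parameters $B,c,w,L,n,\{x_i\}$ and do not depend on $\epsilon$, the factor $\log(\hat{B}_u/\hat{B}_l)$ is absorbed into a constant, yielding total complexity $O(n\log\tfrac{1}{\epsilon})$.

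The part I expect to be most delicate is justifying monotonicity of the greedy feasibility tester in the presence of the NFZ bypass (lines 8--11 of Algorithm~\ref{alg:decalg}), because relocating $x_i^{\prime}$ to $\delta^l$ and using the residual budget to raise $h_i$ creates a non-trivial coupling between the horizontal and vertical energy expenditures. I would handle this by an exchange argument on the altitude $h_i$ chosen after the NFZ snap: for any feasible solution with leftover $\hat{B}$, the prefix reached by processing UAVs in index order is at least as long as $\bar{L}$ produced by Algorithm~\ref{alg:decalg}, because at every step the greedy pushes $\bar{L}$ to the maximum reachable value consistent with the budget $B-\hat{B}$, Proposition~\ref{order}, and the NFZ constraint. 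Once this invariant is established the rest of the proof is a clean composition of the binary-search bound with the feasibility oracle.
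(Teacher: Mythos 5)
Your proof follows essentially the same route as the paper's: discretize $[\hat{B}_l,\hat{B}_u]$ into a grid of step $\epsilon\hat{B}_l$, binary-search over it with Algorithm~\ref{alg:decalg} as the feasibility oracle, and use $\hat{B}_l\le\hat{B}^*$ to convert the additive grid error $\epsilon\hat{B}_l$ into the multiplicative ratio $(1-\epsilon)$, with the identical complexity accounting ($O(n)$ per query, $O(\log\frac{1}{\epsilon})$ queries since $\hat{B}_u/\hat{B}_l$ is instance-dependent but $\epsilon$- and $n$-independent). The one place you go beyond the paper is in explicitly flagging, and sketching an argument for, the monotonicity of the feasibility test in $\hat{B}$ and the correctness of the greedy prefix-extension under the NFZ snap --- both of which the paper's proof takes for granted.
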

\begin{proof}
The leftover energy storage of a given instance has an upper bounded of $\hat{B}_u$ and a lower bound of $\hat{B}_l$. Obviously, $\hat{B}_l \leq \hat{B}^* \leq \hat{B}_u$. Choosing a small constant $\epsilon > 0$, we divide each $\hat{B}_l$ into $\frac{1}{\epsilon}$ sub-intervals. Each interval has length $\epsilon \cdot \hat{B}_l$, where $\epsilon \cdot \hat{B}_l \leq \epsilon \cdot \hat{B}^*$. We divide $\hat{B}_u$ by $\epsilon \cdot \hat{B}_l$ into $\lceil{\frac{\hat{B}_u}{\epsilon \cdot \hat{B}_l}}\rceil$ sub-intervals as set $\Lambda$ as in line 1 of Algorithm~\ref{alg:fptas}.

Then, each operation in line 6 of binary search will reduce set $\Lambda$ by applying Algorithm~\ref{alg:decalg} on a given $\hat{B}$. It finally terminates with leftover energy storages $\hat{B}^{\prime}$ and $\hat{B}^{\prime\prime}$, as illustrated in Figure~\ref{fig:binary_search}, where $\hat{B}^{\prime} < \hat{B}^*$ and $\hat{B}^{\prime\prime} = \hat{B}^{\prime} + \epsilon \cdot \hat{B}_l > \hat{B}^*$. The outcome $\hat{B}^{\prime}$ is our searched optimum. We can obtain that $\hat{B}^{\prime} = \hat{B}^{\prime\prime} - \epsilon \cdot \hat{B}_l \geq \hat{B}^* - \epsilon \cdot \hat{B}_l \geq (1 - \epsilon) \hat{B}^*$. That is, we have $(1 - \epsilon) \hat{B}^* \leq \hat{B}  < \hat{B}^*$. Therefore, we obtain the minimum leftover storage with approximation ratio $1-\epsilon$ as compared to the global optimum. Regarding the computational complexity, note that our feasibility checking algorithm runs in $O(n)$ time. Moreover, we have $O(\lceil\frac{\hat{B}_u}{\epsilon \cdot \hat{B}_l}\rceil)$ candidates of leftover energy storage for binary search with logarithmic running time, in which $\frac{\hat{B}_u}{\hat{B}_l}$ is constant and independent of $n$. Overall, this algorithm runs in $O(n \log \frac{1}{\epsilon})$ time.
\end{proof}

\subsection{Deploying UAVs with different initial energy storages}\label{sec:diffb}

When UAVs are required to undertake emergency tasks, some of them may not be fully charged yet and they have different initial energy storages in general. To be specific, each UAV $\mu_i$ has an initial storage $B_i$ upon deployment. In this subsection, we further consider that UAVs may have different initial energy storages under the constraint of NFZs. The UAVs' two-dimensions of heterogeneity (in both initial locations and initial energy storages), joint ground-sky movements and NFZ consideration results in high complexity for UAVs' energy-saving deployment algorithm design. In the following, we show such a general UAV deployment problem is NP-hard, and propose a heuristic algorithm accordingly.


Given any value of $\hat{B}$ as the minimum leftover energy storage among the UAVs after deployment, we want to determine whether UAVs can be moved to reach a full coverage of $[0, L]$ under the NFZs constraint. We call it feasibility problem as in previous Section~\ref{sec:fea1}, which is now proved to be NP-complete by reducing from the well-known {\em Partition problem} \cite{garey2002computers}. In this reduction, the basic idea is to consider two NFZs around the middle point of the target line interval, and partition the UAVs with different coverage energy storages into two subsets, each of which seamlessly covers a half of the line interval without any overlaps.

\begin{theorem}\label{thry:np}
When UAVs have different initial energy storages and are deployed from different initial locations under the NFZ consideration, our energy-saving UAV swarm deployment problem in (\ref{general}) is NP-hard.
\end{theorem}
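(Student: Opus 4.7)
My plan is to establish NP-hardness by polynomial-time reduction from the \textsc{Partition} problem: given positive integers $\{s_1, \ldots, s_n\}$ with $\sum_i s_i = S$, decide whether there exists $T \subseteq \{1, \ldots, n\}$ with $\sum_{i \in T} s_i = S/2$. I will show NP-completeness of the feasibility variant of problem~(\ref{general})---for a given $\hat{B}$, is there a deployment in which every UAV retains leftover energy at least $\hat{B}$?---from which NP-hardness of~(\ref{general}) follows, since any polynomial-time optimizer would decide feasibility by comparing the optimum to $\hat{B}$.

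The key geometric device is a pair of NFZs placed symmetrically around the midpoint of the target interval $[0, L]$, with $L$ set so that each half requiring coverage has length exactly $S/2$. Taking $\delta_1 = [L/2 - \Delta,\, L/2]$ and $\delta_2 = [L/2,\, L/2 + \Delta]$ with $\Delta$ larger than the maximum achievable coverage radius $r(h^*)$ ensures that no single UAV's coverage can bridge the forbidden middle strip. Hence any feasible deployment rigidly partitions the UAV set into a left group $\mathbf{U}_L$ serving $[0, L/2 - \Delta]$ and a right group $\mathbf{U}_R$ serving $[L/2 + \Delta, L]$, with no cross-barrier coverage contribution possible.

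For each element $s_i$ I introduce a UAV $\mu_i$ with initial ground location $x_i = L/2$ (equidistant from both halves) and an initial energy $B_i$ calibrated so that, at the chosen target leftover $\hat{B}$, the budget $B_i - \hat{B}$ just suffices to reach any point inside either half at altitude $h_i = r^{-1}(s_i/2)$ but no higher. This pins the maximum coverage diameter of $\mu_i$ to exactly $s_i$ on whichever side it is deployed. Because each side must be fully covered (length $S/2$) while the total deliverable diameter is only $\sum_i s_i = S$, feasibility forces $\sum_{i \in \mathbf{U}_L} s_i \geq S/2$ and $\sum_{i \in \mathbf{U}_R} s_i \geq S/2$, so both sums must equal $S/2$---exactly the desired partition. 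Conversely, any valid \textsc{Partition} solution directly yields a feasible deployment via a seamless tiling as in Proposition~\ref{seamless} applied to each half separately.

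The main obstacle I anticipate is calibrating the tuple $(w, c, B_i, x_i)$ so that each UAV simultaneously (i) has a rigid coverage cap of exactly $s_i$, so that partition arithmetic is enforced, and (ii) retains enough horizontal flexibility to be placed anywhere inside its assigned half so that the tiling is realizable. I would address this by taking $w$ small, so that horizontal motion scarcely perturbs the altitude--energy trade-off, and by setting $B_i = \hat{B} + c \cdot r^{-1}(s_i/2) + c w \cdot L/2$: just enough vertical headroom to deliver diameter $s_i$ plus a uniform horizontal slack sufficient to reach any point in either half, but insufficient to raise $h_i$ above $r^{-1}(s_i/2)$. The reduction is clearly polynomial in $n$ and the bit-length of the $s_i$, so \textsc{Partition} reduces to problem~(\ref{general}) in both directions, and NP-hardness follows.
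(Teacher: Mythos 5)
Your overall strategy---reduce from \textsc{Partition} via NFZs placed around the midpoint of $[0,L]$ so that the UAVs are forced to split into two groups, each exactly tiling a half of total length $S/2$---is the same as the paper's. But your specific construction has a fatal flaw: in problem~(\ref{general}) the NFZ constraint only restricts where a UAV may be \emph{positioned} ($x_i' \notin \cup_k \delta_k$); the full interval $[0,L]$, \emph{including} the NFZ strips, must still be covered. With $\delta_1 = [L/2-\Delta, L/2]$ and $\delta_2 = [L/2, L/2+\Delta]$ sharing the midpoint and $\Delta > r(h^*)$, no UAV may hover over $L/2$ and no UAV positioned outside the strips can reach it, since the furthest a UAV at the boundary $L/2-\Delta$ can cover is $L/2-\Delta+r(h^*) < L/2$ (symmetrically from the right). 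Every instance your reduction produces is therefore infeasible, independently of whether the \textsc{Partition} instance is a yes- or no-instance, so the reduction does not decide anything. The paper avoids this by leaving the midpoint itself outside both NFZs and adding one auxiliary UAV $\mu_{m+1}$ whose energy budget is exactly enough to hover at $L/2$ and cover the NFZ-spanned middle region; this anchor both makes the instance feasible when a partition exists and pins the remaining UAVs to exactly tile $[0, L/2-\rho]$ and $[L/2+\rho, L]$. You would need to introduce such a gadget (or otherwise exempt the middle from coverage, which the model does not permit).

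A secondary, fixable weakness: your claim that $B_i = \hat{B} + c\,r^{-1}(s_i/2) + cw\,L/2$ ``pins the maximum coverage diameter of $\mu_i$ to exactly $s_i$'' is not literally true, because a UAV that flies a shorter horizontal distance than $L/2$ can convert the saved budget $cw(L/2-d)$ into extra altitude and hence a strictly larger radius. Your ``take $w$ small'' remark can be made rigorous by choosing $w$ so that the total excess diameter over all $n$ UAVs is below $1$ and invoking integrality of the $s_i$, but as written this step is asserted rather than proved; the argument should be completed before the equivalence ``feasible $\Leftrightarrow$ partition exists'' can be claimed.
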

\begin{proof}
See Appendix~\ref{ap:npc}.
\end{proof}

\begin{algorithm}[t]
\caption{$\kappa$-heuristic for the UAVs with different initial locations and energy storages}
\begin{algorithmic}[1]

\STATE \textbf{Input:}\\ 
$\textbf{U} =\{\mu_1, \mu_2, \ldots, \mu_n\}$,\\
$\kappa$: order changing degree

\STATE \textbf{Output:}\\ 
$\hat{B}$: maximum minimum energy storage

\STATE select $\kappa$ UAVs and enumerate all possible permutations, which produces $N = C_n^{\kappa} \cdot \kappa !$ UAV ordering sequences ($\{s_1, s_2, \ldots, s_{N}\}$) \\
\FOR {$i = 1$ to $N$ }
    \STATE {run Algorithm~\ref{alg:fptas} on $s_i$ to obtain $\hat{B}_i$}
\ENDFOR
\STATE $\hat{B} \gets \max\{\hat{B}_i\}$ \\

\RETURN $\hat{B}$
\end{algorithmic}
\label{alg:Heuristic}
\end{algorithm}


The problem (\ref{general}) is difficult to solve due to the UAVs' distinct initial locations and initial energy storages, which result in exponential number of sequences of UAVs for searching. Yet if the location order after deployment is determined (e.g., with initial location order preserving as in Algorithm~\ref{alg:fptas} under the same initial energy storage), we can compute the maximum leftover energy storage fast under the given location order. Further, if we enumerate the location order among UAVs as many as possible, then we can reach the optimality without any efficiency loss. To balance complexity and efficiency, we accordingly propose Algorithm~\ref{alg:Heuristic} with $\kappa$ as the order changing degree after deployment. Specifically, we select $\kappa$ UAVs and enumerate all possible permutations for their final ground locations' ordering. This produces $N = C_n^{\kappa} \cdot \kappa !$ sequences of UAVs'ordering, each of which runs Algorithm~\ref{alg:fptas} separately. If we care more about the efficiency rather than the complexity, we can choose a larger $\kappa$ and when $\kappa = n$, Algorithm~\ref{alg:Heuristic}'s performance is arbitrarily approach the global optimum as in Theorem~\ref{thrm:FPTAS}.

\begin{figure}[t]
    \centering
        \includegraphics[width=0.7\textwidth]{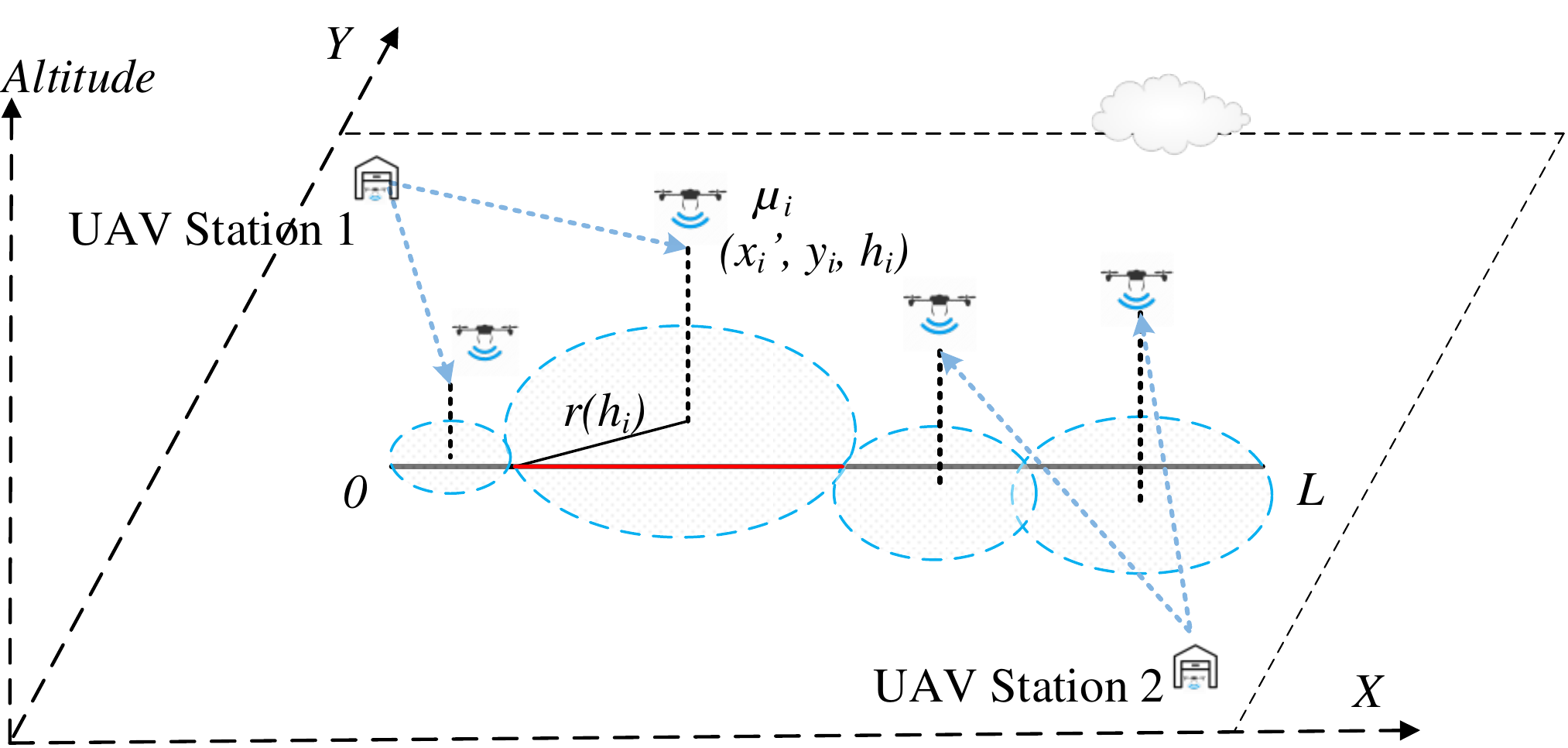}
    \caption{An illustrative example of deploying $n=4$ UAVs that initially located in two UAV stations on the two ends of the 2D ground plane. Here, each UAV station is not necessarily in the same line along the target interval $[0, L]$.}
    \label{fig:covermodel3Dswap}
\end{figure}

\section{Extension to UAVs'energy-saving deployment in a 3D space} \label{sec:3D}

Recall in all the previous sections, we model the target service area as a line interval $[0, L]$ and assume all the UAVs are initially located in the same 1D linear domain. Then we deploy each UAV in 2D by considering the 1D ground and 1D sky movements. In this section, we consider that UAVs' initial locations may not be along the target interval and need to determine UAVs' ground movements in 2D as well as 1D movement to the Sky. Without much loss of generality, we consider two UAV stations on the left and right corners of the ground plane with locations $(x_l, y_l, 0)$ and $(x_r, y_r, 0)$ satisfying $x_l<x_r$, and also place target interval $[0, L]$ in the x-axis in the 2D ground plane, as illustrated in Figure \ref{fig:covermodel3Dswap}. That is, target interval ranges from point $(0, 0 ,0)$ to $(L, 0, 0)$ in the 3D space. Then each UAV $\mu_i$ is now placed finally to point $(x_i^{\prime}, y_i^{\prime}, h_i)$ in 3D space. We want to extend the previously proposed UAV swarm deployment algorithms with theoretical guarantee to 3D. There are $n$ UAVs in total with $|U_l+U_r| = n$ including UAVs $\mu_1 - \mu_{|U_l|}$ initially located in the left UAV station (belonging to set  $U_l$) and UAVs $\mu_{|U_l|+1} - \mu_n$ initially located in the right UAV station (belonging to set $U_r$). We reorder the UAVs from $U_l$ such that $B_1 \leq \cdots \leq B_{|U_l|}$ in increasing energy storage order, while reorder the UAVs from $U_r$ such that $B_{|U_l|+1} \geq \cdots \geq B_n$ in decreasing order.

\begin{lemma}\label{biorder}
At the optimum, the UAVs from the left- and right-hand side UAV stations will not cross each other. That is, $x_1^{\prime} \leq \cdots \leq x_n^{\prime}$ in the $x$-domain in the 3D space. And finally the UAVs will keep the same leftover energy after the deployment.
\end{lemma}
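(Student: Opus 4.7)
The plan is to first establish the $x$-coordinate ordering of the final positions and then conclude the equal-leftover-energy claim by the same reasoning as Proposition~\ref{seamless}. The ordering itself splits into two exchange arguments: one for pairs of UAVs dispatched from the same station (reusing Proposition~\ref{border}) and one for pairs dispatched from different stations.

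For within-station ordering, I would apply the proof of Proposition~\ref{border} separately to $U_l$ and to $U_r$. Within $U_l$ all UAVs start at the common point $(x_l,y_l,0)$, so the cost of flying to $(x_i^{\prime},y_i^{\prime},h_i)$ depends on $x_i^{\prime}$ only through the magnitude $|x_l-x_i^{\prime}|$ together with a fixed $y$-offset; the swap-and-re-place argument of Proposition~\ref{border} then transfers directly and gives $x_1^{\prime}\le\cdots\le x_{|U_l|}^{\prime}$ under $B_1\le\cdots\le B_{|U_l|}$. Within $U_r$ the dispatch point lies on the opposite side, so ``further from the station'' corresponds to a smaller $x$-coordinate; combined with the reverse sorting $B_{|U_l|+1}\ge\cdots\ge B_n$, the mirrored Proposition~\ref{border} yields $x_{|U_l|+1}^{\prime}\le\cdots\le x_n^{\prime}$.

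For the between-station step, I would argue by contradiction: suppose $\mu_i\in U_l$ and $\mu_j\in U_r$ satisfy $x_i^{\prime}>x_j^{\prime}$ at some optimum. Swap only their $x$-coordinates, leaving altitudes and $y$-components untouched. Because the set of ground projection points is unchanged, the projected coverage of $[0,L]$ and the avoidance of every NFZ are preserved; but since $x_l\le x_j^{\prime}<x_i^{\prime}\le x_r$, the horizontal Euclidean travel distance strictly decreases for both $\mu_i$ (now going to $x_j^{\prime}$) and $\mu_j$ (now going to $x_i^{\prime}$). Both UAVs therefore retain strictly more energy, which lifts the min-objective above the assumed optimum, a contradiction. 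Iterating this elementary swap across all pairs pins down the global ordering $x_1^{\prime}\le\cdots\le x_n^{\prime}$.

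Once the ordering is locked, the equal-leftover-energy conclusion is a direct reprise of Proposition~\ref{seamless}: if two adjacent UAVs in the ordering have unequal leftover energies, we can slightly lower the altitude of the richer UAV and raise that of the poorer one to shift a sliver of coverage while preserving the full-coverage constraint, strictly increasing the bottleneck. The main obstacle I anticipate is justifying $x_l\le x_i^{\prime},\,x_j^{\prime}\le x_r$ in the cross-station swap; I plan to handle this by a preliminary projection argument showing that at any optimum every $x_i^{\prime}$ must lie in $[x_l,x_r]$ (otherwise reflecting $x_i^{\prime}$ back into $[x_l,x_r]$ keeps the coverage of $[0,L]\subseteq[x_l,x_r]$ intact while strictly shortening the horizontal trip). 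A smaller subtlety is that the mirrored version of Proposition~\ref{border} used for $U_r$ needs $y_r$ to play the symmetric role of $y_l$, but the original proof only used the common dispatch point rather than its $y$-coordinate, so the argument transfers cleanly.
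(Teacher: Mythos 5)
Your overall strategy coincides with the paper's: within-station ordering by invoking Proposition~\ref{border} on $U_l$ and (mirrored) on $U_r$, a cross-station exchange argument for the remaining pairs, and equal leftover energy by the reasoning of Proposition~\ref{seamless}. There is, however, a concrete flaw in your cross-station swap. You exchange only the $x$-coordinates of $\mu_i\in U_l$ and $\mu_j\in U_r$ and assert that coverage of $[0,L]$ is preserved because ``the set of ground projection points is unchanged.'' Coverage is not determined by the footprint centers alone: each UAV carries its own effective radius $\rho_i=\sqrt{r(h_i)^2-{y_i'}^2}$ on the target line, and these radii travel with the UAVs while the centers are exchanged. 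If originally $\mu_j$ at $x_j'$ and $\mu_i$ at $x_i'$ (with $x_j'<x_i'$) seamlessly cover $[x_j'-\rho_j,\;x_i'+\rho_i]$, then after your swap the union becomes $[x_j'-\rho_i,\;x_i'+\rho_j]$; whenever $\rho_i\neq\rho_j$ one endpoint retracts by $|\rho_i-\rho_j|$, opening a gap against the adjacent UAV's coverage, so the full-coverage constraint can be violated and the contradiction does not close as written. The paper's version of the swap avoids this by repositioning the two UAVs so that the union's endpoints are preserved: $\mu_i$ is placed so that its left edge sits at the original left endpoint $x_j'-\rho_j$ and $\mu_j$ so that its right edge sits at $x_i'+\rho_i$; because the total footprint length is unchanged the two still abut seamlessly, and both horizontal travel distances still strictly decrease. (One could alternatively try to spend the strictly positive energy surplus created by your naive swap on raising altitudes to close the gap, but that needs a separate quantitative argument and is not automatic.)

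Apart from this, your handling of the boundary condition $x_l\le x_j'<x_i'\le x_r$, which is genuinely needed for both travel distances to strictly decrease, is a point the paper's proof glosses over, and your proposed projection argument for it is a sensible way to discharge it.
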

\begin{proof}
First of all, for UAV subset $U_l$ or $U_r$ in the left or right-hand side UAV station, we can prove Proposition~\ref{border} still holds here. That is, we should deploy all the UAVs in $U_l$ from left to right hand side of the target interval according to increasing order of their initial energy storages. Similarly, we deploy the UAVs in $U_r$ from right to left hand side of the target interval according to increasing order of by their initial energy storages. At the optimum, all the UAVs' leftover energy is the same to maximize the minimum leftover energy storage.

Now, we are ready to prove that any two UAVs will not fly to cross each other by contradiction. Note that the UAVs from the same UAV station will not fly across each other, and we will only look at the UAVs' possible crossing from different stations. Suppose that there are two UAVs $\mu_i$ and $\mu_j$ originally in $U_l$ and $U_r$ (i.e., $x_i=x_l$ and $x_j=x_r$), respectively, and they will cross each other with $x_i^{\prime} \geq x_j^{\prime}$ and become neighbors after deployment. They cover a continuous line interval $[x_j^{\prime}-\sqrt{r(h_j)^2- {y_j^{\prime}}^2}, x_i^{\prime}+\sqrt{r(h_i)^2-{y_i^{\prime}}^2}]$ as part of the target interval $[0, L]$, and they have the same leftover energy ($B_i - d_i = B_j - d_j = \hat{B}^*$) in the optimal solution. Next we swap UAVs $\mu_i$ and $\mu_j$ to show a better solution is actually achieved. If we swap the $x$-domain ground location order of $\mu_i$ and $\mu_j$ without changing any UAV's coverage radius or altitude, we can see that $d_i^{\prime} = w \cdot \sqrt{(x_l-x_i^{\prime})^2 + (y_l-y_i^{\prime})^2}+h_i$ and $d_j^{\prime} =  w \cdot \sqrt{(x_r-x_j^{\prime})^2+ (y_r-y_j^{\prime})^2}+h_j$ will decrease, i.e., $d_i^{\prime} < d_i$ and $d_j^{\prime}< d_j$. Specifically, we move $\mu_i$ to cover from $x_j^{\prime} - \sqrt{r(h_j)^2-{y_j^\prime}^2}$ and move $\mu_j$ to cover until $x_i^{\prime} + \sqrt{r(h_i)^2-{y_i^\prime}^2}$. We can see that the original covered range $[x_i^{\prime} - \sqrt{r(h_i)^2- {y_i^{\prime}}^2}, x_j^{\prime} -\sqrt{r(h_j)^2+ {y_j^{\prime}}^2}]$ is still covered by UAVs $\mu_i$ and $\mu_j$, and $d_i^{\prime} < d_i$ and $d_j^{\prime}< d_j$. Given $B_i - c d_i^{\prime}  > \hat{B}^*$ and $B_j - c d_j^{\prime} > \hat{B}^*$, a better solution is achieved to prolong the UAV network lifetime.
\end{proof}

We decide the cooperation among UAVs by separating the UAVs into two subsets $U_l$ and $U_r$, which cover left and right hand parts of $[0, L]$, respectively. Lemma \ref{biorder} helps fix the final location order in $x$-domain of all UAVs and based on the given ordering sequence, we next introduce the feasibility checking problem and binary search similar to Section~\ref{sec:different}.

Before extending Algorithm~\ref{alg:fptas} to our new deployment problem in 3D  given any leftover energy storage $\hat{B}$ and altitude $h_i$ for UAV $\mu_i$, we first determine the leftmost point $a_i$ and the rightmost point $b_i$ on $L$ that can be covered by UAV $\mu_i$ as follows:
{
\setlength{\abovedisplayskip}{3pt}
\setlength{\belowdisplayskip}{1pt}
\begin{align}
\label{eq:a1}
a_i(h_i)= x_l - \sqrt{(\frac{B_i - \hat{B}}{c \cdot w} - \frac{h_i}{w})^2 - (y_i^{\prime} - y_l)^2} - r(h_i),
\end{align}
}
{
\setlength{\abovedisplayskip}{3pt}
\setlength{\belowdisplayskip}{3pt}
\begin{align}\label{eq:b1}
b_i(h_i)= x_r + \sqrt{(\frac{B_i - \hat{B}}{c \cdot w} - \frac{h_i}{w})^2 - (y_i^{\prime} - y_r)^2} + r(h_i).
\end{align}
}

Here we deploy the UAVs in $U_l$ to cover the target area from left to right hand side, and we denote the currently covered interval as $[0, \overline{L}_l]$, while we deploy the UAVs in $U_r$ to cover the target area from right to left hand side, and the currently covered interval is $[\overline{L}_r, L]$.

\begin{algorithm}[tbh]
\caption{Approximation algorithm for deploying UAVs in 3D}
\begin{algorithmic}[1]


\STATE \textbf{Input:}\\ 
$\Lambda = \{\epsilon \hat{B}_l, 2 \epsilon \hat{B}_l, \cdots,  \lceil{\frac{\hat{B}_u}{\epsilon \cdot \hat{B}_l}}\rceil \epsilon \hat{B}_l\}$

\STATE \textbf{Output:}\\ 
$\Lambda(ind)$: $ind$ is the selected index

    \STATE $low \gets 1$ and $high \gets \lceil{\frac{\hat{B}_u}{\epsilon \cdot \hat{B}_l}}\rceil$
    \WHILE{$low <= high$}
           \STATE {$mid \gets \lfloor{(low + high)/2}\rfloor$}
           \STATE {$\hat{B} \gets \Lambda(mid)$}
           \STATE Compute $a_{i}$ for $U_r$ in equation (\ref{eq:a1}) and $b_{i}$ for $U_l$ in equation (\ref{eq:b1})
            \STATE $\overline{L}_l = 0, \overline{L}_r = L$;

            \FOR { $i = 1$ to $ |U_l|$ }
            \IF {$\overline{L}_l \in [b_i-\sqrt{r(h_i)^2-{y_i^{\prime}}^2}, b_i]$}
    	       \STATE {$x_i^{\prime} \gets \overline{L}_l + \sqrt{r(h_i)^2-{y_i^{\prime}}^2}$}
    	\STATE {$\overline{L}_l \gets x_i^{\prime}+\sqrt{r(h_i)^2-{y_i^{\prime}}^2}$}
        \ENDIF
        \ENDFOR

        \FOR { $i =  |U_l|+1$ to $n$ }
        \IF {$\overline{L}_r \in [a_i, a_i+\sqrt{r(h_i)^2-{y_i^{\prime}}^2}]$}
        	\STATE {$x_i^{\prime} \gets \overline{L}_r - \sqrt{r(h_i)^2-{y_i^{\prime}}^2}$}
        	\STATE {$\overline{L}_r \gets x_i^{\prime}-r(h_i)$}
        \ENDIF
        \ENDFOR

       \IF {$\overline{L}_l \geq \overline{L}_r$}
            \STATE {$high \gets mid$}
       \ELSE
            \STATE {$low \gets mid$}
       \ENDIF
        \IF {$low = high-1$}
            \STATE {$ind \gets high$}
            \STATE {break}
        \ENDIF
\ENDWHILE
\RETURN $\Lambda(ind)$

\end{algorithmic}
\label{alg:FPTAS2}
\end{algorithm}

Next, we determine the scope of binary search. The leftover energy $\hat{B}$ among all the UAVs is upper bounded by $\hat{B}_u = \max{B_i}$. We choose a smallest possible energy storage $\hat{B}_l$ as the lower bound of $\hat{B}$. Obviously, $\hat{B}_l \leq \hat{B}^* \leq \hat{B}_u$. Similar to Algorithm~\ref{alg:fptas}, we choose a small positive constant $\epsilon > 0$ and divide each $\hat{B}_l$ into $\frac{1}{\epsilon}$ sub-intervals. Each interval has length $\epsilon \cdot \hat{B}_l$, where $\epsilon \cdot \hat{B}_l \leq \epsilon \cdot \hat{B}^*$. We divide $\hat{B}_u$ by $\epsilon \cdot \hat{B}_l$ to partition into $\lceil{\frac{\hat{B}_u}{\epsilon \cdot \hat{B}_l}}\rceil$ sub-intervals in set $\Lambda$ as the input of Algorithm~\ref{alg:FPTAS2}.

In Algorithm~\ref{alg:FPTAS2}, we first compute $a_i$ and $b_i$ in line 7 according to equations (\ref{eq:a1}) and (\ref{eq:b1}), then deploy the UAVs one by one (line 9 for the UAVs in $U_l$ and line 15 for those in $U_r$) according to their initial energy storages' order to cover the target interval $[0, L]$. Specifically, considering the UAVs in $U_l$, given our currently covered interval $[0, \overline{L}_l]$, iteration $i$ starts with checking whether UAV $\mu_i$ inside can extend the current covered area. That is, we need to check if $\mu_i$ can seamlessly cover from the point $\overline{L}_l$ (i.e., $b_i(h_i) - r(h_i) \leq \overline{L}_l \leq b_i(h_i)$ in line 10). If so, we will deploy UAV $\mu_i$ to $x_i^{\prime} = \overline{L}_l + r(h_i)$ and update $\overline{L}_l$  in lines 11-12. Similar procedure with reverse direction in lines 15-20 follows for those UAVs in set $U_r$.

If $\overline{L}_l \geq \overline{L}_r$ to fully cover the target interval $[0,L]$, the feasibility checking returns that the given $\hat{B}$ is feasible (i.e., $\hat{B} \leq \hat{B}^*$) in line 22, our algorithm will skip this choice $\hat{B}$ and keep running binary search for larger leftover energy solution (if feasible). Otherwise, it returns that $\hat{B}$ is infeasible in line 24 and our algorithm will search for smaller feasible candidates. Finally, our algorithm will return the optimal $\hat{B}^*$ in the search scope. Similar to Theorem~\ref{thrm:FPTAS}, we have the following corollary.

\begin{corollary} \label{coro:FPTAS}
Let $\hat{B}^*$ be the optimal leftover energy storage of our energy-saving UAV deployment problem by dispatching the UAVs swarm from two UAV stations, given any allowable error $\epsilon >0$, Algorithm~\ref{alg:FPTAS2} achieves approximation ratio $(1 - \epsilon)$ and runs in $O(n \log \frac{1}{\epsilon})$ time.
\end{corollary}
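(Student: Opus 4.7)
The plan is to mirror the argument used for Theorem~\ref{thrm:FPTAS}, but with the preparatory work done by Lemma~\ref{biorder} replacing that of Proposition~\ref{order}. First I would invoke Lemma~\ref{biorder} to fix the final $x$-coordinate order of the UAVs: those from $U_l$ are placed left-to-right in increasing order of initial energy $B_i$, and those from $U_r$ are placed right-to-left in increasing order of $B_i$. With this ordering fixed, the 3D problem reduces, for any given target leftover energy $\hat{B}$, to a one-dimensional coverage-extension problem along the $x$-axis, where each UAV $\mu_i$ at altitude $h_i$ contributes an $x$-interval of half-length $\sqrt{r(h_i)^2-(y_i^{\prime})^2}$ and is constrained by its reachable endpoints $a_i(h_i)$, $b_i(h_i)$ in (\ref{eq:a1})–(\ref{eq:b1}).

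Second, I would justify the feasibility-checking portion (lines 7–21 of Algorithm~\ref{alg:FPTAS2}) by an exchange argument: given $\hat{B}$, a feasible deployment respecting the $x$-ordering from Lemma~\ref{biorder} exists if and only if the greedy sweep from both ends succeeds. The ``only if'' is immediate; for the ``if'' direction, any feasible deployment can be modified so that each $U_l$-UAV extends the covered prefix $[0,\overline{L}_l]$ as far right as its energy budget allows, and analogously for $U_r$-UAVs extending $[\overline{L}_r,L]$ leftward, without reducing total coverage. Hence the test ``$\overline{L}_l \geq \overline{L}_r$'' in line 22 correctly decides whether $\hat{B} \leq \hat{B}^*$.

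Third, I would repeat the binary-search analysis of Theorem~\ref{thrm:FPTAS} verbatim. Since $\hat{B}_l \leq \hat{B}^* \leq \hat{B}_u$, the discretized set $\Lambda$ contains some grid point within $\epsilon \hat{B}_l$ of $\hat{B}^*$; binary search terminates with a feasible $\hat{B}^{\prime}$ and an infeasible $\hat{B}^{\prime\prime} = \hat{B}^{\prime} + \epsilon \hat{B}_l$, so
\begin{equation*}
\hat{B}^{\prime} \;\geq\; \hat{B}^{\prime\prime} - \epsilon \hat{B}_l \;\geq\; \hat{B}^{*} - \epsilon \hat{B}_l \;\geq\; (1-\epsilon)\hat{B}^{*},
\end{equation*}
establishing the $(1-\epsilon)$-approximation guarantee. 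The complexity claim follows because each feasibility check is a single linear sweep through the $n$ UAVs in $O(n)$ time, and binary search on $|\Lambda| = \lceil \hat{B}_u/(\epsilon \hat{B}_l)\rceil$ grid points uses $O(\log(1/\epsilon))$ iterations when $\hat{B}_u/\hat{B}_l$ is treated as a constant independent of $n$, giving total running time $O(n \log \tfrac{1}{\epsilon})$.

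The main obstacle I expect is not the binary-search or approximation-ratio bookkeeping, which is essentially copied from Theorem~\ref{thrm:FPTAS}, but rather the exchange argument justifying the two-sided greedy sweep in 3D. Unlike the 1D setting, each UAV now has a free $y$-coordinate $y_i^{\prime}$ that couples its reachable $x$-range with its altitude via $\sqrt{r(h_i)^2 - (y_i^{\prime})^2}$; I would therefore need to argue that at the optimum $y_i^{\prime}$ can, without loss, be chosen to make the UAV's projection onto the $x$-axis as large as its remaining energy permits, so that the greedy ``extend the covered interval as far as possible'' step in lines 10–12 (and symmetrically 16–18) is without loss of generality. Once this reduction is in place, the rest of the proof is a direct transcription of Theorem~\ref{thrm:FPTAS}.
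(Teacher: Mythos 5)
Your proposal matches the paper's approach essentially exactly: the paper offers no standalone proof of Corollary~\ref{coro:FPTAS}, instead deferring to the binary-search and approximation-ratio argument of Theorem~\ref{thrm:FPTAS} combined with the ordering result of Lemma~\ref{biorder} and the two-sided greedy feasibility sweep of Algorithm~\ref{alg:FPTAS2}, which is precisely the structure you reconstruct. The one obstacle you candidly flag --- that the free $y_i^{\prime}$ coordinate couples each UAV's reachable $x$-range with its altitude and so the greedy ``extend as far as possible'' step needs a without-loss-of-generality justification --- is a genuine point that the paper itself leaves unaddressed, so your treatment is, if anything, more careful than the original.
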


\begin{figure}[t]
\centering
\subfigure[Same initial energy storage and without the consideration of NFZ.]{
\label{fig:subfig:ins1}
\includegraphics[width=0.7\textwidth]{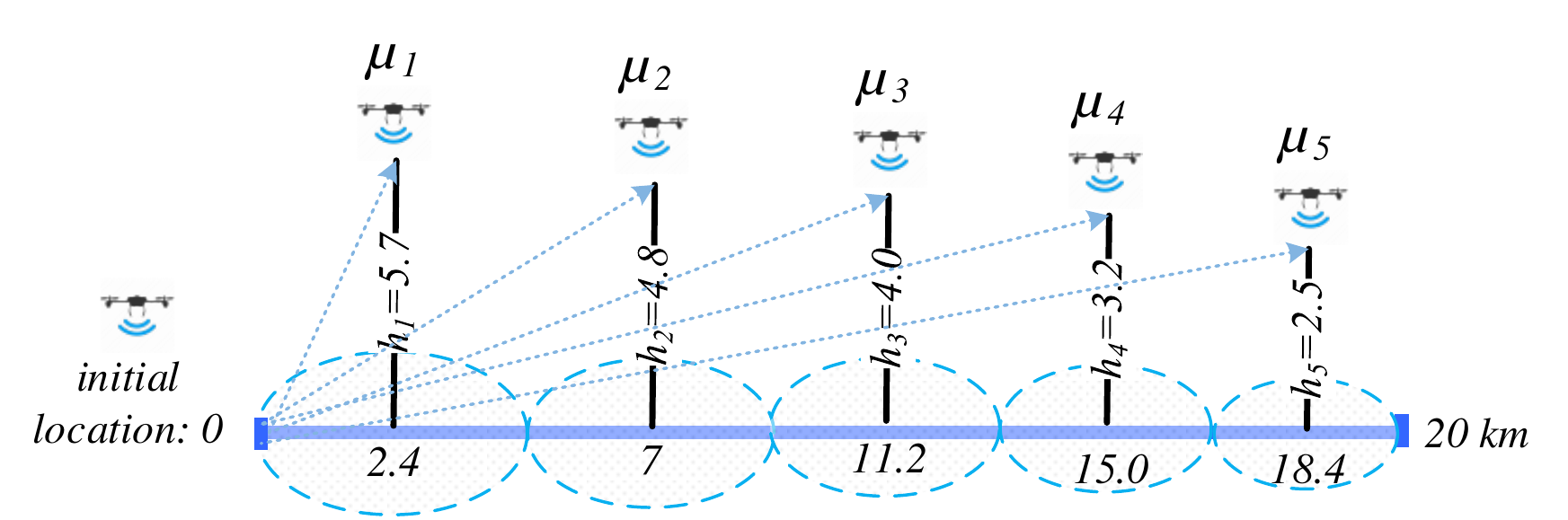}}
\hspace{-0.00in}
\subfigure[Same initial energy storage and with the consideration of NFZ.]{
\label{fig:subfig:ins2}
\includegraphics[width=0.7\textwidth]{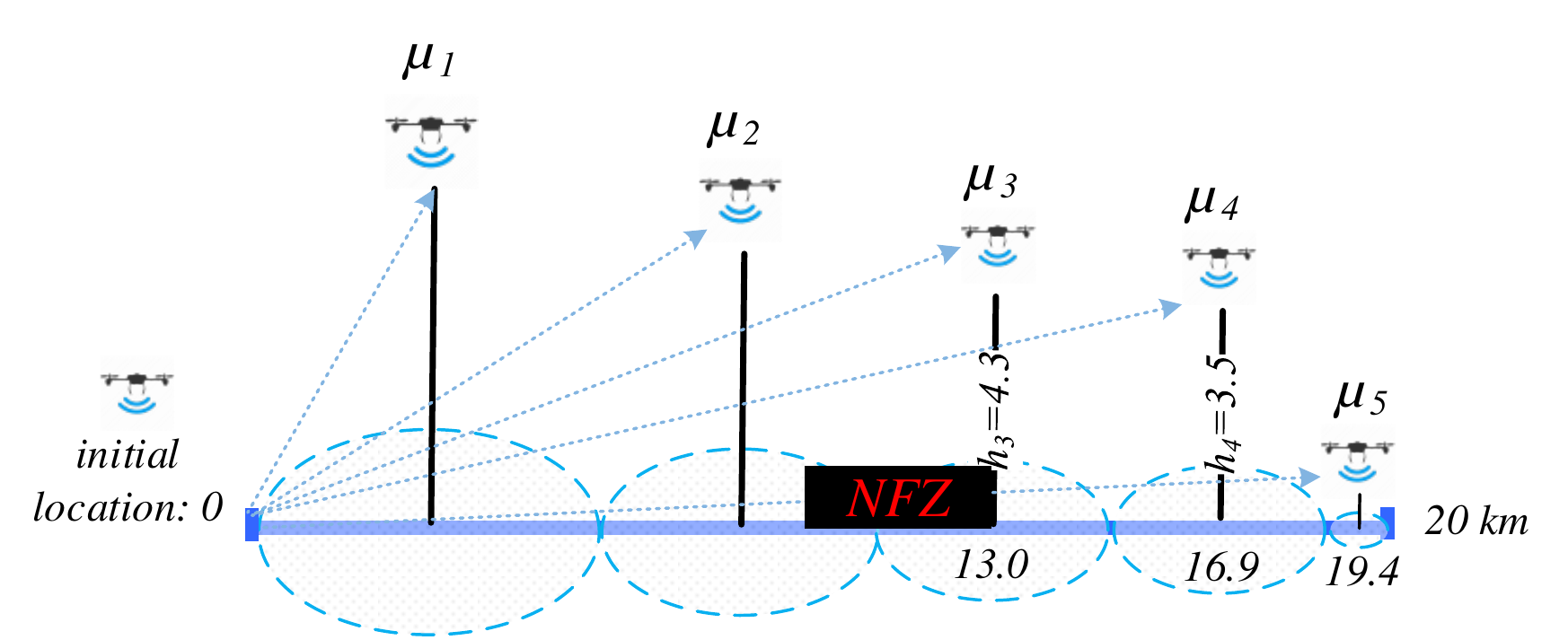}}
\hspace{-0.00in}
\subfigure[Increase initial energy storage $B_3 = 0.9~kWh$ for UAV $\mu_3$ and with the consideration of NFZ.]{
\label{fig:subfig:ins3}
\includegraphics[width=0.7\textwidth]{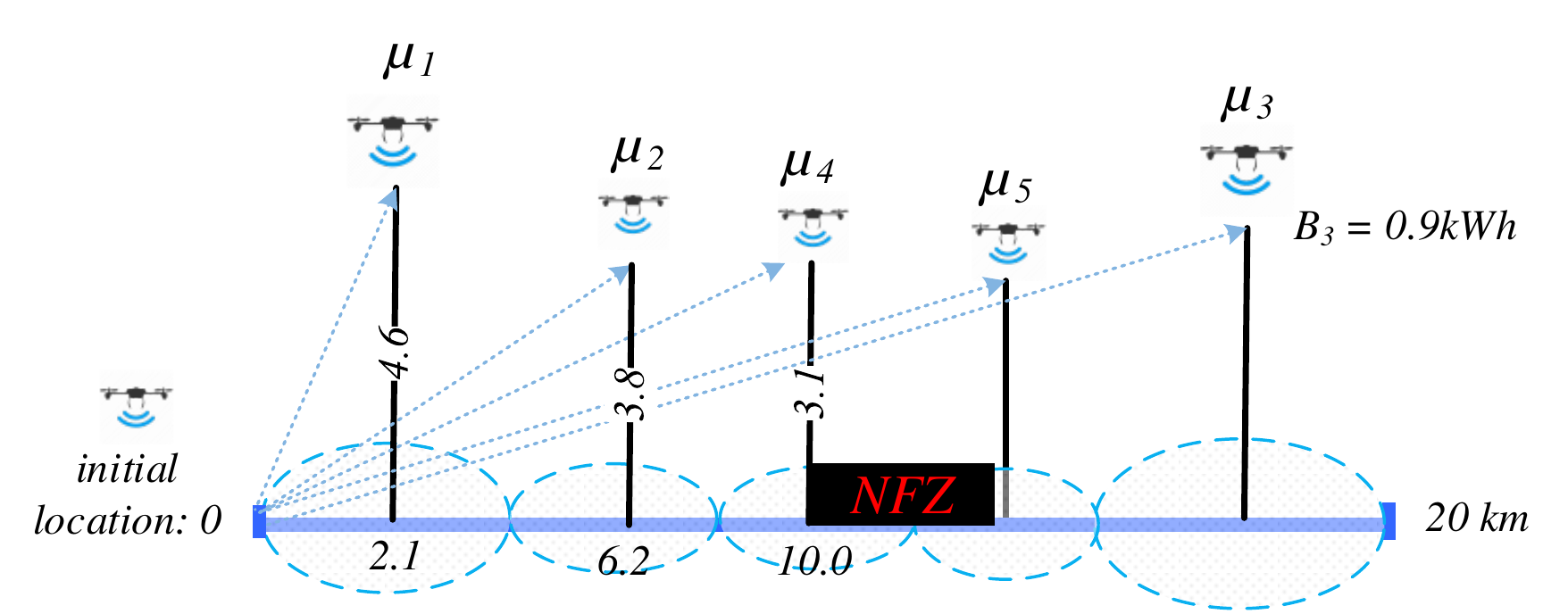}}
\caption{The optimal deployment solutions by dispatching 5 UAVs from the same initial location.} \label{fig:sins}
\end{figure}

\section{Simulations} \label{sec:exp}
In this section, we present extensive simulations to evaluate the proposed algorithms in different scenarios. Regarding the setting of UAVs' technical specifications, we approximate the wireless coverage radius of UAV $\mu_i$ hovering at altitude $h_i$ as a concavely increasing function $r(h_i) = \alpha h_i^{\beta}$ with $\alpha =1, \beta = 0.5$ till the turning point $h^*= 2km$ according to the measurement experiment done in \cite{al2014optimal}. We set $w = 0.2$, the initial energy storage $B = 0.78~kWh$, and the energy consumption per flying distance $c = 21.6~Wh/km$, as recommended by \cite{figliozzi2017drones}. Moreover, we set the length of target interval as $L = 20$ km, and the length of the NFZ as $3~km$ inside.


\begin{figure}[t]
    \centering
        \includegraphics[width=0.6\textwidth]{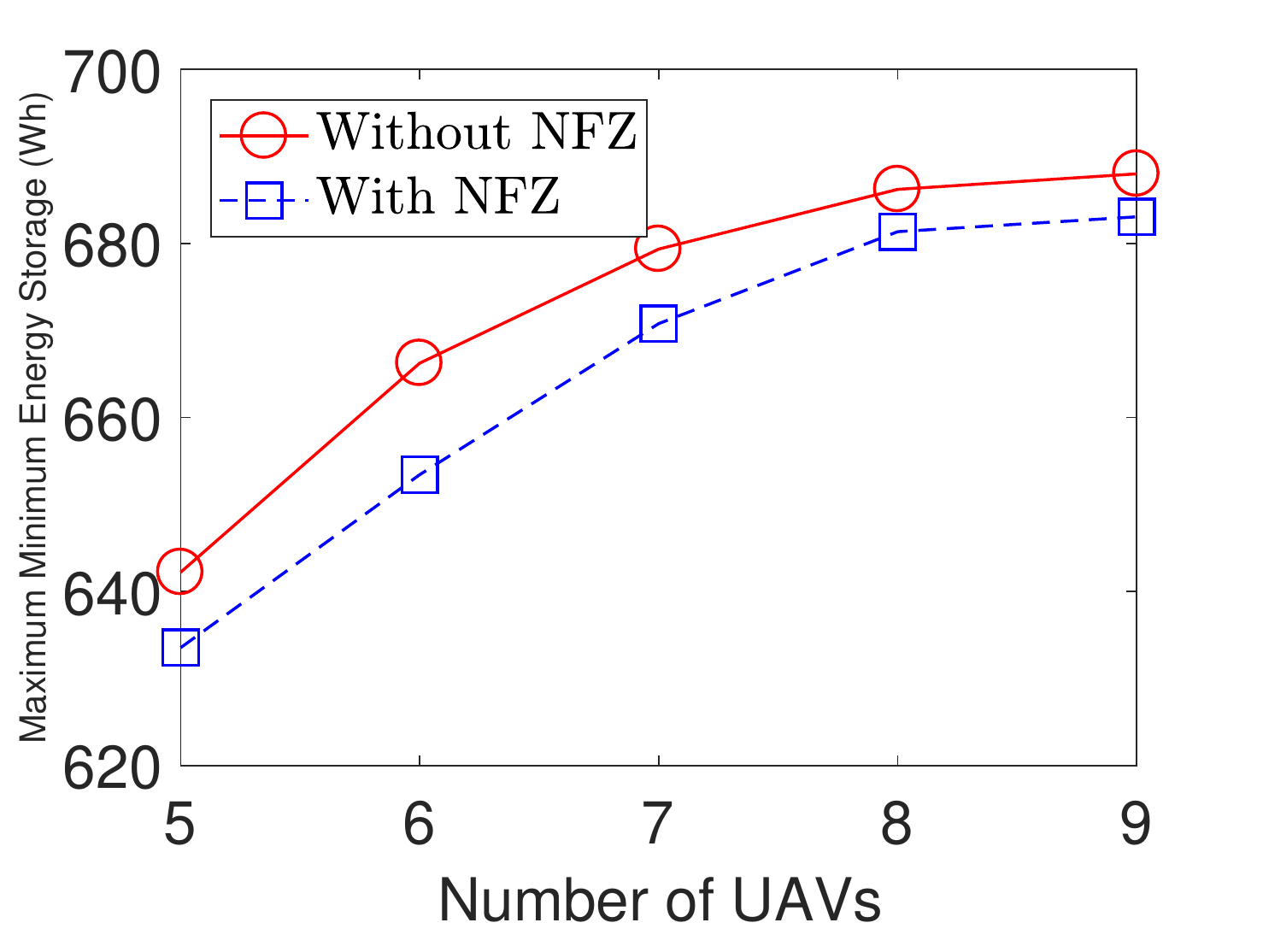}
    \caption{The maximum minimum leftover energy storage with and without NFZ consideration versus the number of UAVs.}
    \label{fig:sameo}
\end{figure}

\subsection{Deploying UAVs from the same station}\label{sec:expsame}

We first present the simulation results when dispatching the UAVs from the same location, as studied in Section~\ref{sec:same}. Figure~\ref{fig:sins} shows 5 UAVs' final service altitudes and ground destinations by solving (\ref{eq:sum}) and (\ref{eq:eq}). Figure~\ref{fig:subfig:ins1} shows the optimal solution without NFZ consideration, and it tells us that a UAV deployed further away to the right-hand-side of the ground should be placed to a lower altitude for keeping multi-UAVs' energy consumptions or travel distances the same. Figure~\ref{fig:subfig:ins2} shows the case when we add an NFZ $[\delta^l, \delta^r]$ with edges $\delta^l=10 km, \delta^r=13 km$. We can see that UAV $\mu_3$ has to re-locate to the right NFZ edge $\delta^r$. This pushes UAVs $\mu_4$ and $\mu_5$ further away to the right corner of $[0, L]$. Further, after increasing UAV $\mu_3$'s initial energy storage from $0.78$ to $0.9$ kWh in Figure~\ref{fig:subfig:ins3}, UAV $\mu_3$ with the largest energy storage flies to the right-most ground location and UAV $\mu_4$ re-locates to $\delta^l$ finally. This is consistent with Proposition~\ref{border}.

Next, we show the minimum leftover energy storage or UAV the swarm's residual lifetime (i.e., objective of problem (\ref{general})) as a function of the number of UAVs. Figure~\ref{fig:sameo} shows the minimum leftover energy storage increases as more UAVs are given, as each UAV only needs to cover a smaller range and this improves the bottleneck UAV performance. It also shows more energy will be consumed when we take into account the NFZ constraints.

\begin{figure}[t]
    \centering
    \includegraphics[width=0.7\textwidth]{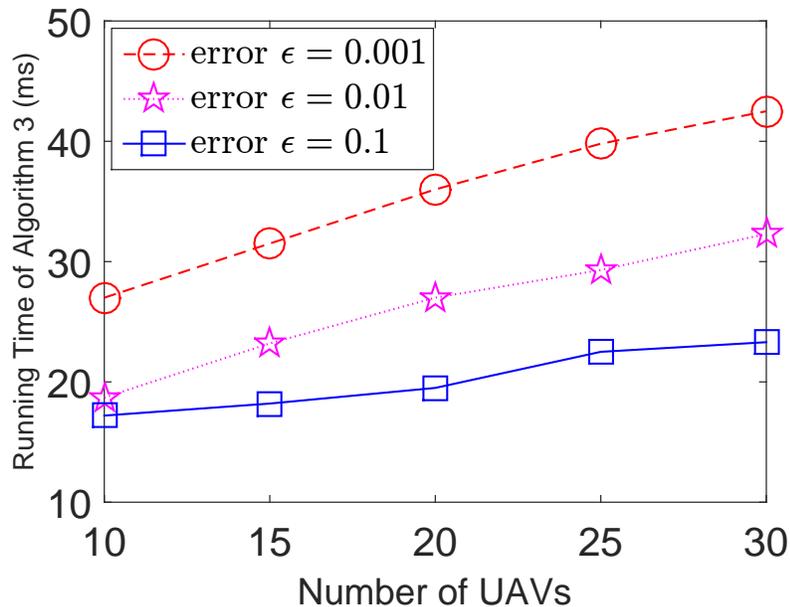}
    \caption{The running times (in milliseconds) of Algorithm~\ref{alg:fptas} under different values of $\epsilon$ and UAV number.}
\label{fig:time}
\end{figure}

\subsection{Deploying UAVs from different initial locations}\label{sec:expdif}
In Figure~\ref{fig:time}, we show the running time or computational complexity of our Algorithm~\ref{alg:fptas} in Section~\ref{sec:different} under different values of relative error $\epsilon = 0.1$, $0.01$ and $0.001$, respectively. All the UAVs have the same initial energy storage here. It can be observed that the smaller value of $\epsilon$ is, the longer running time is required. In addition, as the number of the UAVs increases, the running time increases yet is actually less than the theoretical bound $O(n \log \frac{1}{\epsilon})$ proved in Theorem~\ref{thrm:FPTAS}, since the theoretical analysis there is based on the worst case analysis and here is based on average or empirical analysis.

\begin{figure}[t]
    \centering
    \includegraphics[width=0.6\textwidth]{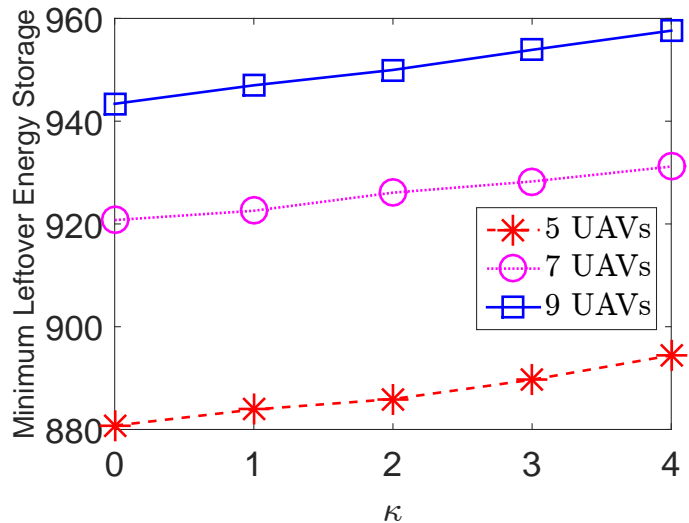}
    \caption{UAV swarm lifetime returned by Algorithm~\ref{alg:Heuristic} with different values of $\kappa$ for the UAVs with different initial energy storages.}
\label{fig:heu}
\end{figure}

Now we are ready to evaluate Algorithm~\ref{alg:Heuristic} in Section~\ref{sec:diffb} for deploying the UAVs with different initial locations as well as different energy storages. Recall that the initial order preserving property no longer holds, and we need to update the order changing degree $\kappa$ to balance the efficiency and complexity. In Figure~\ref{fig:heu}, problem (\ref{general}) is solved by Algorithm~\ref{alg:Heuristic} with $\kappa = 0$ (preserving all the UAVs' initial location order as in Algorithm~\ref{alg:fptas}), and with $\kappa = 1, 2, 3, 4$ (with increasingly more freedom to rotate any two neighboring UAVs's final ground destinations), respectively. The computational complexity is positively related to $\kappa$. As $\kappa$ increases, we gradually relax the location order preserving constraint, and Figure~\ref{fig:heu} shows the minimum leftover storages obtained by Algorithm~\ref{alg:Heuristic} increases.

\subsection{Deploying UAVs in the 3D space}\label{sec:two-sides}
In Figure~\ref{fig:twosides}, the sustainable UAV deployment problem in 3D is solved by Algorithm~\ref{alg:FPTAS2}, assuming all UAVs are dispatching from the two UAV stations on the left and right-hand corners of the target interval. As the UAV set is $\textbf{U} = U_l \cup U_r = \{\mu_1, \mu_2, \ldots, \mu_n\}$, we set $n = 10$ and discuss the effect of the difference between $|U_l|$ and $|U_r|$ on the UAV swarm's lifetime performance. By setting a larger value of $|U_l|$, we have more UAVs in the left station rather than the right station. We can see from Figure~\ref{fig:twosides}, when the division of UAVs is symmetric between the two UAV stations, our algorithm can achieve the longest UAV swarm lifetime, since the UAVs on the left-hand-side (or right-hand-side) station do not need to fly further to cover the right (left) part of the target interval $[0, L]$.

\begin{figure}[tbh]
    \centering
    \includegraphics[width=0.7\textwidth]{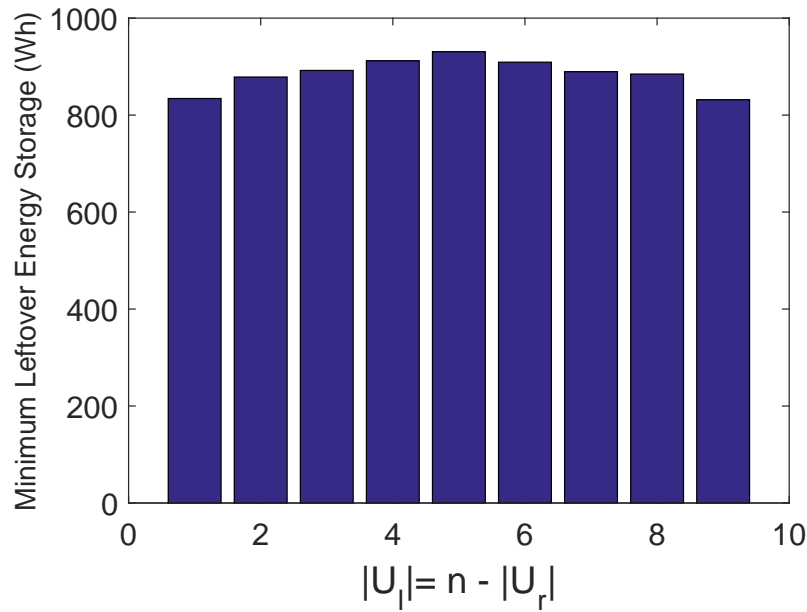}
    \caption{Output of Algorithm~\ref{alg:FPTAS2} with different UAV distributions between the two UAV stations.}
\label{fig:twosides}
\end{figure}

\section{Conclusion} \label{sec:dis}

The energy-saving UAV deployment problem for keeping wireless coverage is of great practical importance. We study the energy-saving UAV deployment problem to prolong the UAV swarm's residual lifetime. When all UAVs are deployed from a common UAV station, we propose an optimal deployment algorithm, by jointly optimizing UAVs' flying distances on the ground and final service altitudes in the sky. We show that a UAV with larger initial energy storage should be deployed further away from the UAV station for balancing multi-UAVs' energy consumption in the flight. Due to NFZs consideration, the problem becomes more difficult and the whole UAV swarm consumes more energy. We solve it optimally in $O(n \log n)$ time. Moreover, when $n$ UAVs are dispatched from different initial locations, we first prove that any two UAVs will not fly across each other in the flight as long as they have the same initial energy storage, and then design an approximation algorithm to arbitrarily approach the optimum. Further, we consider that UAVs may have different initial energy storages under the constraint of NFZs, and we prove this problem is NP-hard. Despite of this, we successfully propose a heuristic algorithm to solve it by balancing the efficiency and computation complexity well. Finally, we extend the FPTAS to a 3D scenario and validate theoretical results by extensive simulations. Still, there are some open issues in future work, i.e., the interference among UAVs' wireless services and/or the algorithms' extensions to cover higher dimensional target, e.g 2D area or 3D object.

\clearpage
\appendices

\section{Proof of Proposition~\ref{seamless}} \label{ap:att}

\begin{figure}[t]
    \centering
        \includegraphics[width=0.7\textwidth]{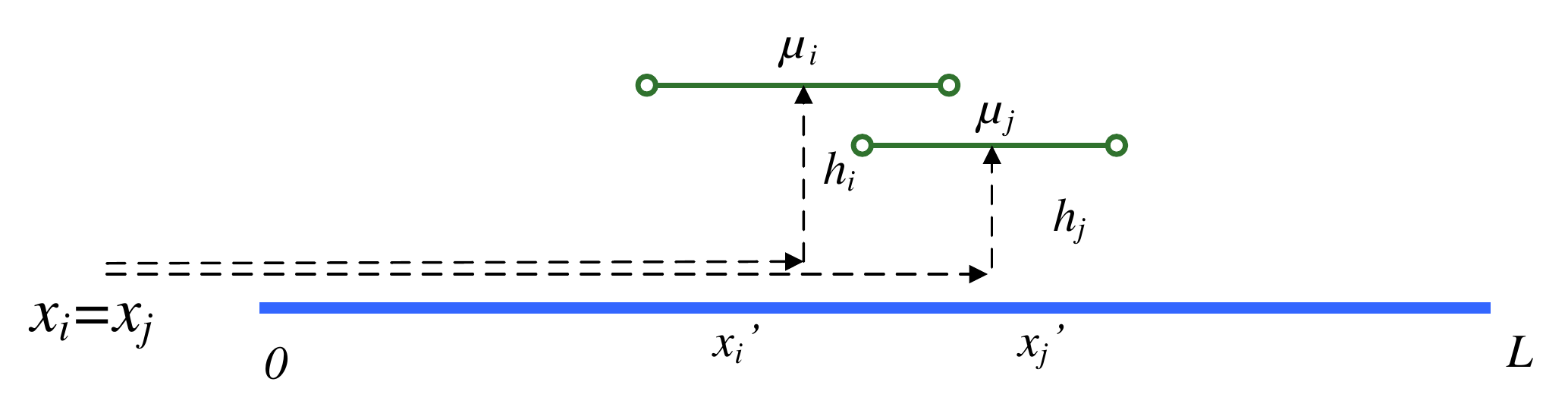}
    \caption{Two neighboring UAVs' coverages overlap.}
    \label{fig:overlap}
\end{figure}

\begin{proof}
As shown in Section \ref{sec:sys}, for energy saving, at the optimum the altitude $h_i$ of each UAV $\mu_i$ after deployment to be not higher than $h^*$. Thus, we only consider the concavely increasing part of the function $r(h_i)$ here. First, we show that any two neighboring UAVs will not overlap in the optimal solution by contradiction. Suppose in the optimal solution, the coverages of two UAVs $\mu_i$ and $\mu_j$ overlap as shown in Figure \ref{fig:overlap}. In this case, we can simply lower the service altitude of $\mu_i$ and move left relatively to remove the overlap, while decrease its energy consumption for moving. We can see that we can obtain better solution and the leftover energy storages of both UAVs do not increase by removing the overlap. This is a contradiction. Next, suppose two UAVs $\mu_i$ and $\mu_{j}$ seamlessly cover a subinterval of $L$ in the optimal solution, but they have different leftover energy (i.e., $\hat{B}_i = B_i-c \cdot d_i$, $\hat{B}_j = B_{j} - c \cdot d_{j}$). If $\hat{B}_i < \hat{B}_j$ and $\hat{B}_i$ is the bottleneck, we can lower the service altitude of $\mu_i$ and move left relatively to decrease its energy consumption while increase the service altitude of $\mu_{j}$ for keeping exactly the same total coverage. In this way, the bottleneck $\mu_i$ has more leftover energy and the objective of problem (\ref{general}) is further improved. Otherwise, $\hat{B}_i > \hat{B}_j$ and $\hat{B}_j$ is the bottleneck, we can lower the service altitude of $\mu_j$ and move right relatively to decrease its energy consumption. In the meantime, we increase the service altitude of $\mu_i$ and move right to cover more for keeping exactly the same total coverage. In the end, the bottleneck $\hat{B}_j$ is increased and our proof is completed.
\end{proof}

\section{Proof of Theorem~\ref{thry:np}} \label{ap:npc}
\begin{proof}

Given an instance of the sustainable UAV deployment problem and $\hat{B}$, we show that it is NP-hard to determine whether the UAVs' leftover energy storage is at least $\hat{B}$ after deployment. Here, we choose concave function $r(h_i) = \alpha h_i^{\beta}$ till turning point $h^*$ and set $\alpha = 0.5$ and $\beta = 1$, and the problem is NP-hard as long as we show this special problem is NP-hard.

We reduce the {\em Partition problem}, which is a well-known NP-hard problem \cite{garey2002computers}, to our sustainable UAV deployment problem. The Partition problem is defined as follows: Given a sequence of positive integers $1 \leq a_{1} \leq a_{2} \leq, \ldots, \leq a_{m}$, we want to determine whether there exists a set of indices $\Gamma \subseteq \{1, 2, \ldots, m\}$ such that $\sum_{i \in \Gamma}{a_{i}} = \frac{1}{2} \sum_{i=1}^{m}{a_{i}}$.

Given a Partition instance, we construct the following sustainable UAV deployment problem. There is a UAV for each input number: $x_i = \frac{L}{2}$, and $B_1 = \rho + 3r(h_1) + \hat{B}$, $B_n = \frac{L}{2} + r(h_n) + \hat{B}$, and $B_{1} \leq B_{2} \leq, \ldots, \leq B_{n}$. We add one UAV $\mu_{m+1}$ with $B_{m+1} = 2\rho + \hat{B}$. $r(h_1) > \rho$.

We show that if $a_{1}, \ldots, a_{m} \in $ Partition, then there exists a solution with $\hat{B}$. $\sum_{i \in \Gamma}{a_{i}} = \frac{1}{2} \sum_{i=1}^{m}{a_{i}} = \frac{L}{2} - \rho$. We set $r_i = \frac{1}{2} a_i$. We deploy the UAVs with initial energy storage $B_i$ for $i \in \Gamma$ to cover $[0, \frac{L}{2} - \rho]$.

For UAV $\mu_i$ ($i \in \Gamma$), assume $|\Gamma| = T$, we make the initial energy storage $B_i$ satisfy the following requirements:

\begin{align}
&\frac{L}{2} + r(h_1) = B_1 - \hat{B} \notag \\
&\ldots \notag \\
&\frac{L}{2} - 2(r(h_1) + \ldots + r(h_{T-1}) + r(h_T) = B_T - \hat{B} \notag \\
&r(h_1) + \dots + r(h_{T-1}) + r(h_T) = \frac{L}{2} - \rho \notag
\end{align}
Similar process can be conducted on UAVs $\mu_i$ ($i \notin  \Gamma$) to cover $[\rho, \frac{L}{2}]$. We can compute $\sum B_i = \frac{1}{2} \sum_{i=1}^{m}{a_{i}}$ to just fully cover $[0, \frac{L}{2} - \rho]$ or $[\frac{L}{2} + \rho, \frac{L}{2}]$.


In our sustainable UAV deployment problem, we have $n=m+1$ UAVs which are initially located at the position $\frac{L}{2}$. Moreover, $\mu_i$ is associated with initial energy storage $B_i = a_i$ for $1 \leq i \leq m$ and $\mu_{m+1}$ is associated with $2 \rho + \hat{B}$. There are two NFZs with length $2 \rho$ located at both sides of point $\frac{L}{2}$. This transformation can clearly be performed in polynomial time. See Figure~\ref{fig:np} for an example.

\begin{figure}[tbh]
    \centering
        \includegraphics[width=0.7\textwidth]{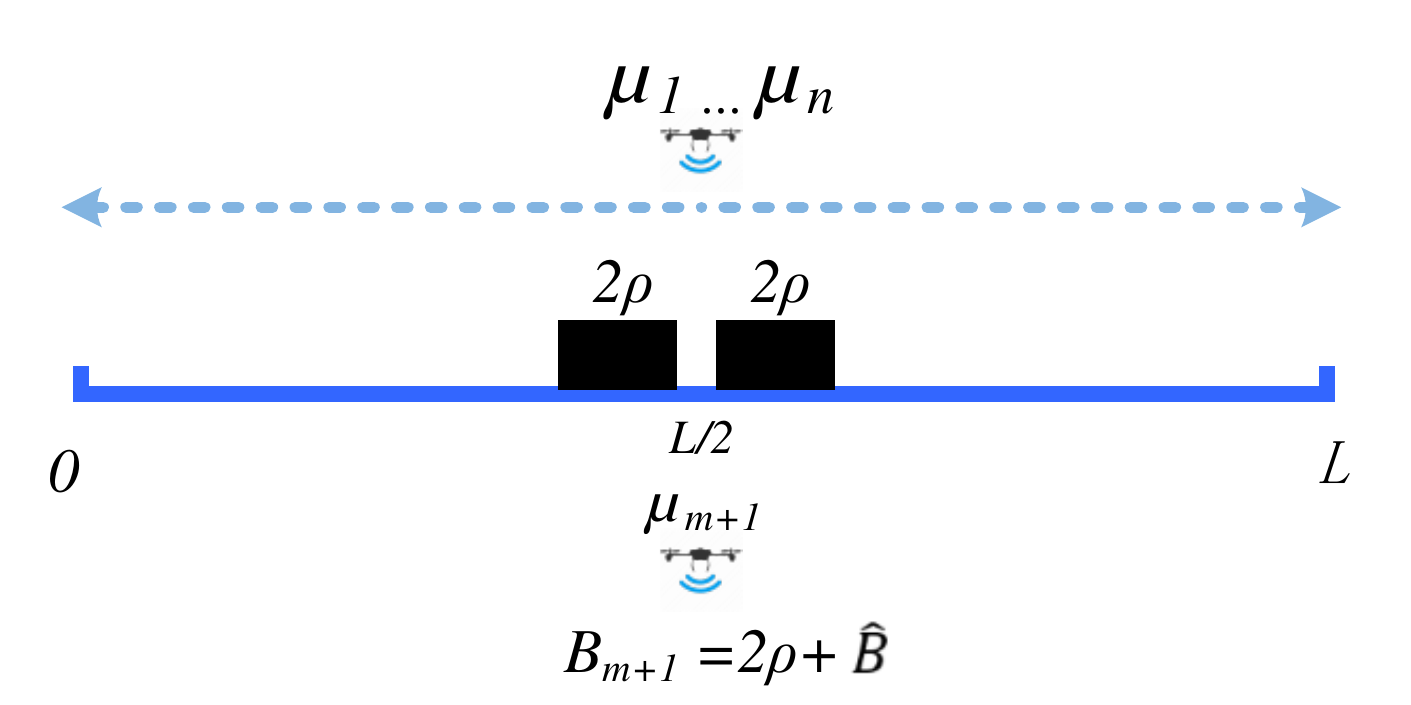}
    \caption{NP-hard proof of sustainable UAV deployment problem.}
    \label{fig:np}
\end{figure}

We now prove that there exists a solution to the instance $P$ of the Partition problem if and only if the constructed instance $C$ of the sustainable UAV deployment problem has a solution of at least $\hat{B}$.\\

\noindent $(\boldsymbol \Longrightarrow)$ Given a solution $\Gamma$ to the instance $P$ of the Partition problem, we can move UAVs $\mu_i$ for $i \in \Gamma$ to the left such that every point of the interval $[0,\frac{L}{2} - \rho]$ is covered and move $\mu_j$ for $j \not \in \Gamma$ to the right such that every point of the interval $[\frac{L}{2} + \rho, \frac{L}{2}]$ is covered. Moreover, we move $\mu_{m+1}$'s vertically to $\rho$ such that every point of the interval $[\frac{L}{2} - \rho, \frac{L}{2} + \rho]$ is covered. Since we have $\sum_{i \in \Gamma}{2 \cdot r(h_i)} =\sum_{j \in \{1, 2, \ldots, m\} \setminus \Gamma }{2 \cdot r(h_j)} = \frac{1}{2} \sum_{i=1}^{m}{a_{i}} = \frac{L}{2} - \rho$, it is obvious to see that this is a feasible solution and the detailed movements can be implemented in the straightforward way. Therefore, we have a solution of at least $\hat{B}$ to the instance $C$ of the sustainable UAV deployment problem.\\

\noindent $(\boldsymbol \Longleftarrow)$ Now we have a feasible solution of at least $\hat{B}$ to the instance $C$ of the sustainable UAV deployment problem. We first observe that the UAV $\mu_{m+1}$ has to fly vertically to $2\rho$. Since every point of the interval $[0, L]$ is required to be covered, there does not exist an ``overlapped interval'' between any two UAVs' ranges in a feasible solution. For the UAVs $\mu_1, \ldots, \mu_n$, we need to find a subset of UAVs with the summation of battery to fully cover $[0,\frac{L}{2} - \rho]$ or $[\frac{L}{2} + \rho, \frac{L}{2}]$, which is $\frac{1}{2} \sum_{i=1}^{m}{a_{i}}$. The above analysis shows that $\sum_{i \in \Gamma} r(h_i)= \sum_{i \in \Gamma}{a_{i}}=\frac{1}{2} \sum_{i=1}^{m}{a_{i}}$, which implies that we have a solution to the instance $P$ of the Partition problem. The proof is thus complete.
\end{proof}

\section{Proof of Lemma~\ref{biorder}} \label{ap:biorder}
\begin{proof}
First of all, for UAV subset $U_l$ or $U_r$ in each of the left and right-hand side UAV stations, Proposition~\ref{border} still holds. That is, we should deploy all the UAVs in $U_l$ from left to right hand side of the target interval according to increasing order of by their initial energy storages. Similarly, we deploy the UAVs in $U_r$ from right to left hand side of the target interval according to increasing order of by their initial energy storages. At the optimum, all the UAVs' leftover energy is the same.

Now, we are ready to prove this lemma by contradiction. Consider that there are two UAVs $\mu_i$ and $\mu_j$ that will cross each other with $x_i^{\prime} \geq x_j^{\prime}$ and become neighbors after deployment. Here, $\mu_i$ is originally located in $U_l$ and $\mu_j$ in $U_r$ (i.e., $x_i = x_l$, $x_j = x_r$). They cover a continuous line interval $[x_j^{\prime}-\sqrt{r(h_j)^2- {y_j^{\prime}}^2}, x_i^{\prime}+\sqrt{r(h_i)^2-{y_i^{\prime}}^2}]$ along target interval $[0, L]$, and they have the same leftover energy ($B_i - d_i = B_j - d_j = \hat{B}^*$) in the optimal solution. Next we swap $\mu_i$ and $\mu_j$ to show a better solution is actually achieved. If we swap the ground location order of $\mu_i$ and $\mu_j$ without changing any UAV's coverage radius or altitude, we can see that $d_i^{\prime} = w \cdot \sqrt{(x_l-x_i^{\prime})^2 + (y_l-y_i^{\prime})^2}+h_i$ and $d_j^{\prime} =  w \cdot \sqrt{(x_r-x_j^{\prime})^2+ (y_r-y_j^{\prime})^2}+h_i$ will decrease, i.e., $d_i^{\prime} < d_i$ and $d_j^{\prime}< d_j$. Specifically, we move $\mu_i$ to cover from $x_j^{\prime} - \sqrt{r(h_j)^2-{y_j^\prime}^2}$ and move $\mu_j$ to cover until $x_i^{\prime} + \sqrt{r(h_i)^2-{y_i^\prime}^2}$. We can see that the original covered range $[x_j^{\prime} -\sqrt{r(h_j)^2- {y_j^{\prime}}^2}, x_i^{\prime} + \sqrt{r(h_i)^2- {y_i^{\prime}}^2}]$ is still covered by $\mu_i$ and $\mu_j$, and $d_i^{\prime} < d_i$ and $d_j^{\prime}< d_j$. Given $B_i - c d_i^{\prime}  > \hat{B}^*$ and $B_j - c d_j^{\prime} > \hat{B}^*$, a better solution is achieved to prolong the UAV network lifetime. The proof is completed.
\end{proof}


\bibliographystyle{plainnat}

\end{document}